\newcommand{\N}{\mathbb{N}}
\newcommand{\Z}{\mathbb{Z}}
\newcommand{\Geqt}{G_{\Delta}}
\newcommand{\pasc}{PASC algorithm\xspace}
\DeclareMathOperator{\argmax}{argmax}
\DeclareMathOperator{\argmin}{argmin}
\DeclareMathOperator{\id}{id}
\DeclareMathOperator{\succr}{succ}
\DeclareMathOperator{\Prob}{Pr}
\title{The Structural Power of Reconfigurable Circuits in the Amoebot Model}
\author{Andreas Padalkin}{Paderborn University, Germany}{andreas.padalkin@upb.de}{https://orcid.org/0000-0002-4601-9597}{}
\author{Christian Scheideler}{Paderborn University, Germany}{scheideler@upb.de}{https://orcid.org/0000-0002-5278-528X}{}
\author{Daniel Warner}{Paderborn University, Germany}{dwarner@upb.de}{https://orcid.org/0000-0002-9423-6094}{}
\authorrunning{A. Padalkin, C. Scheideler and D. Warner} % mandatory. First: Use abbreviated first/middle names. Second (only in severe cases): Use first author plus 'et al.'
\keywords{progammable matter, amoebot model, reconfigurable circuits, spanning tree, symmetry detection} % mandatory; please add comma-separated list of keywords
\begin{document}

\maketitle

\begin{abstract}

    % ========================================
    % Intro
    % ========================================

    The \emph{amoebot model} [Derakhshandeh et al., 2014] has been proposed as a model for programmable matter consisting of tiny, robotic elements called \emph{amoebots}.
    We consider the \emph{reconfigurable circuit extension} [Feldmann et al., JCB 2022] of the geometric (variant of the) amoebot model that allows the amoebot structure to interconnect amoebots by so-called \emph{circuits}.
    A circuit permits the instantaneous transmission of signals between the connected amoebots.
    In this paper, we examine the structural power of the reconfigurable circuits.

    We start with some fundamental problems like the \emph{stripe computation problem} where, given any connected amoebot structure $S$, an amoebot $u$ in $S$, and some axis $X$, all amoebots belonging to axis $X$ through $u$ have to be identified.
    Second, we consider the \emph{global maximum problem}, which identifies an amoebot at the highest possible position with respect to some direction in some given amoebot (sub)structure.
    A solution to this problem can then be used to solve the \emph{skeleton problem}, where a (not necessarily simple) cycle of amoebots has to be found in the given amoebot structure which contains all boundary amoebots.
    A canonical solution to that problem can then be used to come up with a canonical path, which provides a unique characterization of the shape of the given amoebot structure.
    Constructing canonical paths for different directions will then allow the amoebots to set up a spanning tree and to check symmetry properties of the given amoebot structure.

    The problems are important for a number of applications like rapid shape transformation, energy dissemination, and structural monitoring.
    Interestingly, the reconfigurable circuit extension allows polylogarithmic-time solutions to all of these problems.
\end{abstract}

\section{Introduction}
\label{sec:intro}

The \emph{amoebot model} \cite{DBLP:conf/spaa/DerakhshandehDGRSS14,DBLP:conf/wdag/DaymudeRS21} is a well-studied model for programmable matter \cite{DBLP:journals/ijhsc/ToffoliM93} -- a substance that can be programmed to change its physical properties, like its shape and density.
In the geometric variant of this model, the substance (called the \emph{amoebot structure}) consists of simple particles (called \emph{amoebots}) that are placed on the infinite triangular grid graph and are capable of local movements through \emph{expansions} and \emph{contractions}.

Inspired by the \emph{nervous} and \emph{muscular system}, Feldmann et al.~\cite{FPSD21} introduced a \emph{reconfigurable circuit extension} to the amoebot model with the goal of significantly accelerating fundamental problems like leader election and shape transformation.
As a first step, they showed that leader election, consensus, compass alignment, chirality agreement, and various shape recognition problems can be solved in at most $O(\log n)$ time.
This paper continues this line of work by considering a number of additional problems:

First, we consider the \emph{stripe computation problem} where, given any connected amoebot structure $S$, an amoebot $u$ in $S$, and some axis $X$, all amoebots belonging to axis $X$ through $u$ have to be identified.
Second, we consider the \emph{global maximum problem}, which identifies an amoebot at the highest possible position with respect to some direction in some given amoebot (sub)structure.
A solution to this problem can then be used to solve the \emph{skeleton problem}, where a (not necessarily simple) cycle of amoebots has to be found in the given amoebot structure which contains all boundary amoebots.
A canonical solution to that problem can then be used to come up with a canonical path, which provides a unique characterization of the shape of the given amoebot structure.
Constructing canonical paths for different directions will then allow the amoebots to set up a spanning tree and to check symmetry properties of the given amoebot structure.

The problems have a number of important applications.
The stripe computation problem is important to avoid conflicts in joint amoebot contractions and expansions (see \cref{fig:joint_movement}), which is critical for rapid shape transformation.
A spanning tree is an important step towards energy distribution from amoebots with access to energy to amoebots without such access \cite{DBLP:conf/icdcn/DaymudeRW21}, and canonical skeleton paths as well as symmetry checks are important for structural monitoring and repair.

\begin{figure}[tb]
    \centering
    \begin{subfigure}[b]{0.2\textwidth}
        \centering
        \includegraphics[page=1]{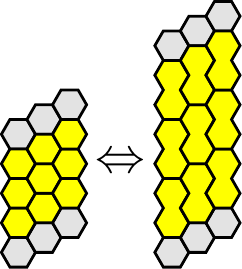}
        \caption{}
        \label{fig:joint_movement:example}
    \end{subfigure}
    \hfill
    \begin{subfigure}[b]{0.29\textwidth}
        \centering
        \includegraphics[page=1]{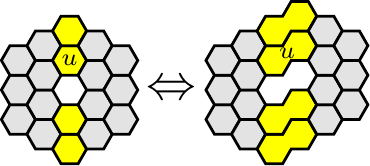}
        \caption{}
        \label{fig:joint_movement:stripe}
    \end{subfigure}
    \hfill
    \begin{subfigure}[b]{0.13\textwidth}
        \centering
        \includegraphics[page=1]{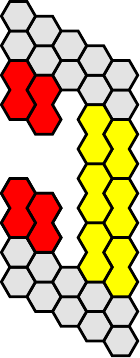}
        \caption{}
        \label{fig:conflicts1}
    \end{subfigure}
    \hfill
    \begin{subfigure}[b]{0.13\textwidth}
        \centering
        \includegraphics[page=1]{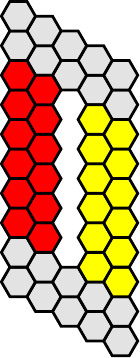}
        \caption{}
        \label{fig:conflicts2}
    \end{subfigure}
    \caption{
        Feldmann et al.~\cite{FPSD21} have proposed joint movements to the amoebot model where
        an expanding amoebot is capable of pushing other amoebots away from it,
        and a contracting amoebot is capable of pulling other amoebots towards it.
        (a) and (b) show a joint expansion (from left to right) resp. a joint contraction (from right to left) of the yellow amoebots.
%         (a) and (b) show a joint expansion (from left to right) resp. a joint contraction (from right to left) of the yellow amoebots.
        The left side of (b) shows stripe $\operatorname A(S, u, N)$ (see Section~\ref{sec:problem}).
        The stripe can expand without causing any conflicts.
        (c) and (d) show exemplary conflicts.
        % in the joint movement extension.
        (c) If the yellow amoebots contract, the red amoebots will collide.
        In order to avoid the collision, the red amoebots have to contract as well.
%         Note that we can also construct collisions with expanding amoebots.
        (d) If the yellow amoebots expand, the amoebot structure will tear apart.
        In order to maintain all connections, the red amoebots have to expand as well.
    }
    \label{fig:joint_movement}
    \label{fig:conflicts}
\end{figure}

\subsection{Geometric Amoebot Model}
\label{sec:model:amoebot}

% \textcolor{red}{%
% In this paper, we focus on the geometric amoebot model though our circuit extension can also be applied to any other grid graph.
% }%
In the \emph{geometric amoebot model} \cite{DBLP:conf/wdag/DaymudeRS21}, a set of $n$ amoebots is placed on the infinite regular triangular grid graph $\Geqt = (V, E)$ (see Figure~\ref{fig:model_classic}).
An amoebot is an anonymous, randomized finite state machine that either occupies one or two adjacent nodes of $\Geqt$, and every node of $\Geqt$ is occupied by at most one amoebot.
If an amoebot occupies just one node, it is called \emph{contracted} and otherwise \emph{expanded}.
% and exactly one of its occupied nodes is called its \emph{head}.
Two amoebots that occupy adjacent nodes in $\Geqt$ are called \emph{neighbors}.
% The set $\operatorname N(u)$ denotes the \emph{neighborhood} of amoebot $u$.
% \textcolor{red}{%
% Whenever amoebot $u$ is a neighbor of amoebot $v$, the canonical amoebot model allows $u$ to execute read and write operations on the (public) memory of $v$.
% In our reconfigurable circuit extension, we will use a different way of exchanging information, but as we will see, read and write operations can be simulated with this so that our circuit model is indeed an extension of the canonical amoebot model (see Section~\ref{subsec:messages}).
% }%
% 
% \textcolor{red}{%
% An amoebot can move through \emph{contractions} and \emph{expansions}.
% A contracted amoebot occupying node $u$ can expand into an unoccupied adjacent node $v$.
% Thereafter, it occupies both, $u$ and $v$.
% An expanded amoebot occupying nodes $v$ and $w$ can contract into one of its occupied nodes.
% Thereafter, it only occupies that node.
% Additionally, two adjacent amoebots can perform a \emph{handover} if one is contracted and the other is expanded:
% Let node $u$ be occupied by the contracted amoebot, and let nodes $v$ and $w$ be occupied by the expanded amoebot such that nodes $u$ and $v$ are adjacent.
% The contracted amoebot expands into node $v$ while the expanded node contracts into node $w$.
% }%
Amoebots are able to move through \emph{contractions} and \emph{expansions}.
However, since our algorithms do not make use of movements, we omit further details and refer to \cite{DBLP:conf/wdag/DaymudeRS21} for more information.

Each amoebot has a compass orientation (it defines one of its incident edges as the northern direction) and a chirality (a sense of clockwise or counterclockwise rotation) that it can maintain as it moves, but initially the amoebots might not agree on their compass orientation and chirality.
% An amoebot can label its incident edges in the order of its sense of chirality in a consecutive fashion starting with 1, where the edge labeled with 1 is the edge to the north whenever it is unique and otherwise the edge of its head pointing to the north.
In this paper, we assume that all amoebots share a common compass orientation and chirality.
This is reasonable since Feldmann et al.~\cite{FPSD21} showed that all amoebots are able to come to an agreement within the considered extension (see \cref{sec:related}).

Let the \emph{amoebot structure}~$S \subseteq V$ be the set of nodes occupied by the amoebots.
By abuse of notation, we identify amoebots with their nodes.
We say that $S$ is \emph{connected} iff $G_S$ is connected, where $G_S = \Geqt|_S$ is the graph induced by~$S$.
In this paper, we assume that initially, $S$ is connected and all amoebots are contracted.
Also, we assume the fully synchronous activation model, i.e., the time is divided into synchronous rounds, and every amoebot is active in each round.
On activation, each amoebot may perform a movement and update its state as a function of its previous state.
However, if an amoebot fails to perform its movement, it remains in its previous state.
% \textcolor{red}{%
% In order to make each state transmission self-initiated,
% we have to decouple the coordination and execution of a handover from each other as follows.
% So far, a handover was initiated by one of the two participating amoebots.
% Now, we require that both amoebots perform the handover separately from each other at the same time, i.e., they have to coordinate the handover beforehand (e.g., by message transmission).
% Note that the synchronous activation model justifies this adjustment since it allows the amoebots to determine an exact point of time to perform a handover.
% }%
The time complexity of an algorithm is measured by the number of synchronized rounds required by it.

\begin{figure}[tb]
    \centering
    \begin{subfigure}[b]{0.2\textwidth}
        \centering
        \includegraphics[page=1]{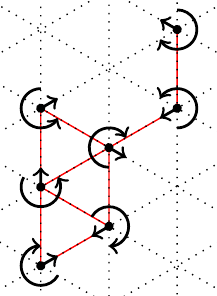}
        \caption{}
        \label{fig:model_classic}
    \end{subfigure}
    \hfill
    \begin{subfigure}[b]{0.2\textwidth}
        \centering
        \includegraphics[page=1]{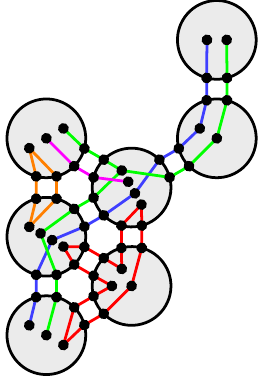}
        \caption{}
        \label{fig:model_graph}
    \end{subfigure}
    \hfill
    \begin{subfigure}[b]{0.2\textwidth}
        \centering
        \includegraphics[page=1]{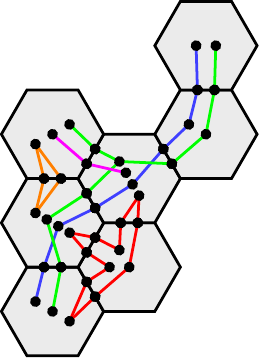}
        \caption{}
        \label{fig:model_hex}
    \end{subfigure}
    \hfill
    \begin{subfigure}[b]{0.25\textwidth}
        \centering
        \includegraphics[page=1]{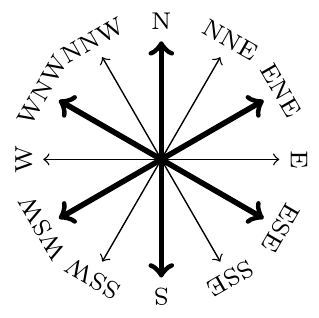}
        \caption{}
        \label{fig:compass}
    \end{subfigure}
    \caption{
        (a) shows an amoebot structure $S$.
        The dotted lines indicate the triangular grid $\Geqt$.
        The nodes indicate the amoebots. The arrows show their chirality and compass orientation.
        The red edges indicate the graph $G_S$.
        (b) and (c) show an amoebot structure with $k = 2$ external links between neighboring amoebots.
        The amoebots are shown in gray.
        The nodes on the boundary are the pins,
        and the ones within the amoebots the partition sets.
        An edge between a partition set $Q$ and a pin $p$ implies $p \in Q$.
        Each color indicates another circuit.
        (a) and (b) are taken from \cite{FPSD21}.
        (d) shows the cardinal directions.
        The thick arrows indicate the cardinal directions along the main axes,
        and the thin ones the cardinal directions perpendicular to the main axes.
    }
    \label{fig:model}
\end{figure}

\subsection{Reconfigurable Circuit Extension}
\label{sec:model:circuits}

In the \emph{reconfigurable circuit extension} \cite{FPSD21},
each edge between two neighboring amoebots $u$ and $v$ is replaced by $k$ edges called \emph{external links} with endpoints called \emph{pins}, for some constant $k \ge 1$ that is the same for all amoebots.
For each of these links, one pin is owned by $u$ while the other pin is owned by $v$.
% We assume that the $k$ pins on the side of $u$ resp. $v$ are consecutively numbered from 1 to $k$, and there are two possible ways by which these pins are matched (i.e., belong to the same link).
% If $u$ and $v$ have the same chirality, pin $i$ of $u$ is matched with pin $k-i+1$ of $v$, and if $u$ and $v$ have different chiralities, pin $i$ of $u$ is matched with pin $i$ of $v$ (see also Figure~\ref{fig:chirality}).
In this paper, we assume that neighboring amoebots have a common labeling of their incident external links.

Each amoebot $u$ \emph{partitions} its \emph{pin set} $P(u)$ into a collection $\mathcal Q(u)$ of pairwise disjoint subsets such that the union equals the pin set, i.e., $P(u) = \bigcup_{Q \in \mathcal Q(u)} Q$.
%Note that in contrast to standard literature, we allow empty sets in the partition.
We call $\mathcal Q(u)$ the \emph{pin configuration} of $u$ and $Q \in \mathcal Q(u)$ a \emph{partition set} of $u$.
Let $\mathcal Q = \bigcup_{u \in S} \mathcal Q(u)$ be the collection of all partition sets in the system.
Two partition sets are \emph{connected} iff there is at least one external link between those sets.
Let $L$ be the set of all connections between the partition sets in the system.
Then, we call $H=(\mathcal Q,L)$ the \emph{pin configuration} of the system and any connected component $C$ of $H$ a \emph{circuit} (see Figure~\ref{fig:model_graph}).
Note that if each partition set of $\mathcal Q$ is a \emph{singleton}, i.e., a set with exactly one element, then every circuit of $H$ just connects two neighboring amoebots.
However, an external link between the neighboring amoebots $u$ and $v$ can only be maintained as long as both, $u$ and $v$ occupy the incident nodes.
% If either amoebot leaves the respective node, the external link and its pins are removed from the system.
Whenever two amoebots disconnect, the corresponding external links and their pins are removed from the system.
An amoebot is part of a circuit iff the circuit contains at least one of its partition sets.
A priori, an amoebot $u$ may not know whether two of its partition sets belong to the same circuit or not since initially it only knows $\mathcal Q(u)$.

Each amoebot $u$ can send a primitive signal (a \emph{beep}) via any of its partition sets $Q \in \mathcal Q(u)$ that is received by all partition sets of the circuit containing $Q$ at the beginning of the next round.
The amoebots are able to distinguish between beeps arriving at different partition sets.
More specifically, an amoebot receives a beep at partition set $Q$ if at least one amoebot sends a beep on the circuit belonging to $Q$, but the amoebots neither know the origin of the signal nor the number of origins.
% \textcolor{red}{%
% We have chosen a primitive signal instead of more complex messages to keep our extension as simple as possible.
% Also, we do not have to worry about interference issues in this case.
% However, beeps are enough to send whole messages over time (see Section~\ref{subsec:messages}).
% }%
Note that beeps are enough to send whole messages over time, especially between adjacent amoebots.
We modify an activation of an amoebot as follows.
As a function of its previous state and the beeps received in the previous round, each amoebot may perform a movement, update its state, reconfigure its pin configuration, and activate an arbitrary number of its partition sets.
The beeps are propagated on the updated pin configurations.
If an amoebot fails to perform its movement, it remains in its previous state and pin configuration, and does not beep on any of its partition sets.

In this paper, we will utilize the dual graph of the triangular grid graph, i.e., a hexagonal tesselation, to visualize amoebot structures (see Figure~\ref{fig:model_hex}).
Thereby, we reduce each external link to a single pin.
Furthermore, in order to improve the comparability of circuit configurations, we add pins to each side of the hexagon.

\subsection{Problem Statement and Our Contribution}
\label{sec:problem}

Let $D_m = \{ N, ENE, ESE, S, WSW, WNW \}$ be the set of all cardinal directions along the axes of $\Geqt$,
and $D_p = \{ E, SSE, SSW, W, NNW, NNE \}$ the set of all cardinal directions perpendicular to the axes of $\Geqt$ (see Figure~\ref{fig:compass}).
In the following, we state the considered problems.
An overview of our results is given by Table~\ref{tab:results}.

\begin{table}[hbt]
    \caption{
        An overview of our algorithmic results.
    }
    \begin{tabularx}{0.99\textwidth}{XXXll}
        \toprule
        Problem & Required pins & Runtime & Section & Theorem \\
        \midrule
        Stripe & 2 & $O(\log n)$ & Section~\ref{sec:idapp} & \cref{th:stripe} \\
        Global maxima & 2 & $O(\log^2 n)$ w.h.p. & Section~\ref{sec:idapp} & \cref{th:maxima} \\
        Canonical skeleton & 4 & $O(\log^2 n)$ w.h.p. & Section~\ref{sec:skeleton} & \cref{th:skeleton} \\
        Spanning tree & 4 & $O(\log^2 n)$ w.h.p. & Section~\ref{sec:spanningtree} & \cref{th:spanning_tree} \\
        Symmetry detection & 4 & $O(\log^5 n)$ w.h.p. & Section~\ref{sec:symmetry} & \cref{th:symmetry} \\
        \bottomrule
    \end{tabularx}
    \label{tab:results}
\end{table}

First, we consider the \emph{stripe computation problem}.
Let $\operatorname X(v,d)\subseteq V$ denote the nodes of $\Geqt$ that lie on the axis through the node $v \in V$ into the cardinal direction $d \in D_m \cup D_p$.
For $R \subseteq V$, we call the set $\operatorname A(R, v, d) = R \cap \operatorname X(v,d)$ a \emph{stripe} of $R$ (see Figure~\ref{fig:joint_movement:stripe}).
Note that a stripe is not necessarily connected.
Let an amoebot $u \in S$ and a cardinal direction $d \in D_m \cup D_p$ be given, i.e., each amoebot $v \in S$ knows the cardinal direction $d$ and whether $v = u$.
The goal of each amoebot $v \in S$ is to determine whether $v \in \operatorname A(S,u,d)$.
Our \emph{stripe algorithm} solves the stripe computation problem after $O(\log n)$ rounds.

Second, we consider the \emph{global maxima problem}.
% Let a cardinal direction $d \in D_m \cup D_p$ be given, i.e., each amoebot $v \in S$ knows the cardinal direction $d$.
% The goal of each amoebot $v \in S$ is to determine whether $v \in \argmin_{w \in S} \operatorname f_d(S,w)$ where $\operatorname f_d(S,w)$ denotes the number of amoebots in $S$ that lie in direction $d$ from amoebot $w$.
Let a cardinal direction $d \in D_m \cup D_p$ and a non-empty set $R \subseteq S$ be given, i.e., each amoebot $v \in S$ knows the direction $d$ and whether $v \in R$.
The goal of each amoebot $v \in S$ is to determine whether $v \in \argmin_{w \in R} \operatorname f_d(R,w)$ where $\operatorname f_d(R,w)$ denotes the number of amoebots in $R$ that lie in direction $d$ from amoebot $w$.
We call $\argmin_{w \in R} \operatorname f_d(R,w)$ the set of global maxima of $R$ with respect to $d$.
Our \emph{global maxima algorithm} solves the global maxima problem after $O(\log^2 n)$ rounds w.h.p.%
\footnote{An event holds \emph{with high probability (w.h.p.)} if it holds with probability at least $1 - 1/n^c$ where the constant $c$ can be made arbitrarily large.}

Third, we consider the \emph{(canonical) skeleton problem}.
An amoebot $u$ is a \emph{boundary amoebot} iff it is adjacent to an unoccupied node in $V \setminus S$.
Otherwise, we call $u$ an \emph{inner amoebot}.
A (potentially non-simple) cycle of amoebots is a skeleton iff the cycle contains all boundary amoebots in S.
Note that the skeleton may contain inner amoebots.
An amoebot structure computes a skeleton $C$ iff each amoebot knows its predecessor and successor for each of its occurrences in $C$.
The goal of the skeleton problem is to compute an arbitrary skeleton.

Since skeletons are not unique, we define a \emph{canonical skeleton} with respect to a cardinal direction $d \in D_m \cup D_p$ and a sign $s \in \{ +, - \}$ (abbreviated as $(d,s)$-skeleton).
We defer the definition to Section~\ref{sec:skeleton}.
Let a cardinal direction $d \in D_m \cup D_p$ and a sign $s$ be given, i.e., each amoebot $v \in S$ knows the cardinal direction $d$ and the sign $s$.
The goal of the canonical skeleton problem is to compute the canonical skeleton.
Our \emph{canonical skeleton algorithm} solves the (canonical) skeleton problem after $O(\log^2 n)$ rounds w.h.p.

Our algorithms for the remaining problems are based on skeletons.
However, they split the skeletons into paths that we call \emph{skeleton paths}.
For the canonical skeletons, we define a \emph{canonical skeleton path} by specifying a splitting point.
Our canonical skeleton algorithm determines this point in parallel to the computation of the canonical skeleton.

Fourth, we consider the \emph{spanning tree problem}.
A \emph{tree} is a cycle-free and connected graph.
%\dw{Remove?}
A \emph{spanning tree} of an amoebot structure $S$ is a tree $T = (S, E_T)$ with $E_T \subseteq E$. 
An amoebot structure computes a spanning tree $T$ if each amoebot $u \in S$ knows whether $\{ u, v \} \in E_T$ for each neighbor $v \in \operatorname N(u)$.
The goal of the spanning tree problem is to compute an arbitrary spanning tree.
Our \emph{spanning tree algorithm} solves the spanning tree problem after $O(\log^2 n)$ rounds w.h.p.

Finally, we consider the \emph{symmetry detection problem}.
The goal of that problem is to determine whether the amoebot structure features rotational or reflection symmetries.
Our \emph{symmetry detection algorithm} solves the the problem after $O(\log^5 n)$ rounds w.h.p.

\subsection{Related Work}
\label{sec:related}

The reconfigurable circuit extension was introduced by Feldmann et al. \cite{FPSD21}.
They have proposed solutions for \emph{leader election} (see Section~\ref{sec:leader}), \emph{consensus}, \emph{compass alignment}, \emph{chirality agreement}, and various \emph{shape recognition} problems.
Both, the alignment of the compasses and the agreement on a chirality requires $O(\log n)$ w.h.p.
This makes our assumption of a common compass orientation and chirality reasonable.

To our knowledge the stripe computation, global maxima, (canonical) skeleton, and symmetry detection problem have not been considered within the standard amoebot model.
However, regarding the global maxima problem, Daymude et al.~\cite{DBLP:conf/icdcn/DaymudeGHKSR20} have considered the related problem of determining the dimensions of an object (a finite, connected, static set of nodes) in order to solve various \emph{convex hull problems}.
Their approach can be easily adjusted to compute the global maxima of the amoebot structure.
However, it requires $O(n)$ rounds.

The spanning tree problem is widely studied in the distributed algorithms community, e.g., \cite{DBLP:journals/dc/MashreghiK21} (also see the cited papers within the reference).
% In previous work on the amoebot model, a spanning tree is usually constructed as follows \cite{DBLP:series/lncs/DaymudeHRS19}.
% Each amoebot has three states: \emph{root}, \emph{follower}, and \emph{idle}.
% Initially, each amoebot is in the state idle except for one, which is in the state root.
% The root amoebot can be chosen, e.g., by a leader election.
% If an amoebot $u$ in the state idle is activated and has a neighbor $v$ in the state root or follower, then amoebot $u$ sets amoebot $v$ as its parent amoebot and its state to follower.
% However, this requires $\Omega(D)$ rounds where $D$ is the diameter of the amoebot structure.
% Since $D = \Omega(\sqrt n)$, our solution is a significant improvement.
The \emph{spanning tree primitive} is one of the most used techniques to move amoebots, e.g., \cite{DBLP:series/lncs/DaymudeHRS19,DBLP:conf/icdcn/DaymudeGHKSR20,DBLP:conf/dna/DerakhshandehGS15,DBLP:journals/dc/LunaFSVY20}.
Beyond that, spanning trees were applied to distribute energy \cite{DBLP:conf/icdcn/DaymudeRW21}.
However, the construction requires $\Omega(D)$ rounds where $D$ is the diameter of the amoebot structure (e.g., see \cite{DBLP:series/lncs/DaymudeHRS19}).
Since $D = \Omega(\sqrt n)$, our solution is a significant improvement.

% The amoebots organize themselves into a tree.
% One amoebot of the tree is the \emph{root amoebot} and the remaining amoebots are \emph{follower amoebots}.
% The primitive has been used to solve several problems requiring locomotion, e.g., shape formation and object coating \cite{DBLP:journals/nc/DaymudeDGPRSS18, DBLP:journals/tcs/DerakhshandehGR17}.

\section{Preliminaries}
\label{sec:prelim}

In this section, we enumerate important primitives given in previous papers.

\subparagraph*{Global Circuit}

If each amoebot partitions its pins into one partition set, we obtain a single circuit that interconnects all amoebots.
We call this circuit the \emph{global circuit} \cite{FPSD21}.

\subparagraph*{Leader Election}
\label{sec:leader}

We make use of a generalized version of the leader election algorithm proposed by Feldmann et al.~\cite{FPSD21}:
%The correctness of this version follows analogously from the analysis in~\cite{FPSD21}.

\begin{theorem}
\label{th:leaderelection}
    Let $C_1, \dots, C_m$ be sets of candidates.
    For each $i \in \{1, \dots, m\}$, let $\mathcal C_i$ be the circuit that connects all candidates of set $C_i$.
    Let $\mathcal C_i \cap \mathcal C_j = \emptyset$ hold for all $i \neq j$.
    An amoebot structure elects a leader from each set of candidates after $\Theta(\log n)$ rounds w.h.p.
\end{theorem}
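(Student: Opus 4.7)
The plan is to run the standard Feldmann et al.\ beeping elimination procedure in parallel, one independent instance per circuit $\mathcal{C}_i$, and then invoke a union bound to argue that all $m$ elections succeed simultaneously within $\Theta(\log n)$ rounds w.h.p. Since the hypothesis guarantees $\mathcal{C}_i \cap \mathcal{C}_j = \emptyset$ for $i \neq j$, a beep injected on $\mathcal{C}_i$ is delivered only to candidates in $C_i$; because each amoebot can distinguish beeps on its different partition sets, the $m$ instances evolve without mutual interference.

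Concretely, in every round each currently active candidate $v \in C_i$ flips an unbiased coin. If the outcome is heads, $v$ beeps on $\mathcal{C}_i$ and remains active; if it is tails, $v$ listens. A tails-candidate that hears a beep demotes itself to a non-candidate, while a tails-candidate that hears silence stays active. A singleton candidate set thus trivially stabilizes with its unique member as leader. For any $C_i$ with at least two candidates, a round in which both outcomes appear strictly decreases the candidate count, and with constant probability the split is balanced enough that the count shrinks by a constant factor. This is precisely the analysis already carried out by Feldmann et al.: after $\kappa \log n$ rounds, for a sufficiently large constant $\kappa$, a single set is reduced to a unique candidate with probability at least $1 - 1/n^{c+1}$ for any prescribed constant $c$.

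Since candidate sets are nonempty and the circuits are pairwise disjoint, we have $m \leq n$, so a union bound over the $m$ instances bounds the total failure probability by $m \cdot n^{-(c+1)} \leq n^{-c}$, yielding the w.h.p.\ guarantee. The matching $\Omega(\log n)$ lower bound already holds for a single election in the beeping model, and it is inherited by the generalization. The only conceptual point worth checking, and it is a mild one, is that no global coordination (such as a global circuit used for termination detection) is required: because the algorithm runs for a fixed $\Theta(\log n)$ rounds with purely local beeping on each $\mathcal{C}_i$, the parallel composition inherits both correctness and running time from the single-set case without further modification.
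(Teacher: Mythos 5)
Your proposal is correct and matches the intended argument: the paper states this theorem without an explicit proof, presenting it as a direct generalization of the Feldmann et al.\ leader election obtained by running independent instances of the coin-flip elimination procedure on the pairwise disjoint circuits and taking a union bound over the $m = O(n)$ instances (note $m$ is $O(n)$ rather than necessarily $\leq n$, since an amoebot may belong to several candidate sets via its constantly many partition sets, but this does not affect the union bound). Your observation that disjoint circuits plus the ability to distinguish beeps on different partition sets guarantees non-interference is exactly the point that makes the parallel composition work.
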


In the classical leader election problem, the amoebot structure $S$ has to elect a leader only from the set $C_1 = S$.

\subparagraph*{Chains}

We call an ordered sequence $C = (u_0, \dots, u_{m-1})$ of $m$ amoebots a \emph{chain} iff
(i) all subsequent amoebots $u_i, u_{i+1}$ are neighbors,
(ii) each amoebot in $C$ except $u_1$ knows its predecessor, and
(iii) each amoebot in $C$ except $u_m$ knows its successor.

\subparagraph*{Boundary Sets}
\label{sec:boundary}

We adopt the definition for the \emph{boundary sets} from \cite{DBLP:conf/dna/DerakhshandehGS15}:
% Recall that $G_S = \Geqt|_S$ is connected by definition.
Consider $G_{V \setminus S} = \Geqt|_{V \setminus S}$.
The connected components of $G_{V \setminus S}$ are called empty regions.
The number of empty regions is finite since $S$ is finite.
Let $R_1, \dots, R_m$ denote the empty regions.
For $i \in \{ 1, \dots, m\}$, the boundary set $B_i$ is the neighborhood of $R_i$ in $S$, i.e., $B_i = \{ u \in S \mid \exists v \in R_i : \{ u,v \} \in E \}$.
There is exactly one infinite empty region since $S$ is finite.
We call the corresponding boundary set the \emph{outer boundary set},
and the others \emph{inner boundary sets}.
Naturally, each boundary set can be represented as a (potentially non-simple) cycle.
% Furthermore, we call an amoebot a \emph{boundary amoebot} if it is part of a boundary set.
% Otherwise, we call an amoebot \emph{inner amoebot}.

% For $i \in \{ 1, \dots, m\}$, the boundary circuit $\mathcal B_i$ connects all amoebots of the boundary set $B_i$.
% Derakhshandeh et al. have shown that each boundary set is able to organize itself into a cycle \cite{DBLP:conf/dna/DerakhshandehGS15}.
% This primitive is utilized in various previous papers, e.g., \cite{DBLP:conf/sss/BazziB19,DBLP:conf/algosensors/DaymudeGRSS17,DBLP:conf/dna/DerakhshandehGS15}.
% We construct the boundary circuit along this cycle.
% Recall that amoebots agree both on a common compass orientation and on a common pin labeling.
% Thus, they can determine the pins to use.
% Note that all boundary circuits are pairwise disjoint.

In order to distinguish inner and outer boundaries, we apply the \emph{inner outer boundary test} by Derakhshandeh et al. \cite{DBLP:conf/dna/DerakhshandehGS15}.
They accumulate the angles of the turns while traversing the cycle of the boundary set once.
An outer boundary set results in a sum of $360^\circ$, and an inner boundary set results in a value of $-360^\circ$.
Note that it is sufficient to count the turns by $60^\circ$ modulo $5$.
This allows us to accumulate the sum with constant memory.
However, the traversing requires $O(n)$ rounds.
We accelerate the accumulation by the following result by Feldmann et al.~\cite{FPSD21}:

\begin{theorem}
\label{th:addtree}
    Let $C = (v_0, \dots, v_{m-1})$ be a chain within the amoebot structure where $m \in \N$ denotes the length of the chain.
    Let $k \in \N$ be constant.
    Let $x_i \in \{0, \dots, k-1 \}$ for all $i \in \{ 0, \dots, m-1 \}$.
%     Suppose that for each $i \in \{ 1, \dots, m \}$, amoebot $v_i$ knows its predecessor, its successor, and the value $x_i$.
    Suppose that for each $i \in \{ 0, \dots, m-1 \}$, amoebot $v_i$ knows the value $x_i$.
    Then, the chain computes $x = \sum_{i \in \{ 0, \dots, m-1 \}} x_i \mod k$ after $O(\log m)$ rounds.
    %such that the result $x$ is stored in $v_1$.
\end{theorem}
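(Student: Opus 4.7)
The plan is to compute $x$ by a tree reduction along the chain: we repeatedly pair up adjacent active amoebots, combine their partial sums modulo $k$, and retire one amoebot of each pair. Since $k$ is constant, a value in $\{0,\dots,k-1\}$ has an encoding of $\lceil \log_2 k \rceil = O(1)$ bits, so a single combination step can be realized by $O(1)$ beeps between two neighboring active amoebots.

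Concretely, each $v_i$ would initialize a partial sum $s_i \gets x_i$ and mark itself active. In phase $r = 1,\dots,\lceil \log m \rceil$ the still-active amoebots are paired consecutively along the current logical chain: the first amoebot in each pair sends its $s_i$ (by beeping the bit-encoding on $O(1)$ ``chain-circuits'' running between active amoebots), the second amoebot sets $s_{i+1} \gets (s_i + s_{i+1}) \bmod k$, and the first retires. A retiring amoebot reconfigures its pin partition so that each chain-circuit passes straight through it, which keeps the logical chain connected across phases using only $O(1)$ pins per edge. After $\lceil \log m \rceil$ phases exactly one active amoebot holds $x$, which it then broadcasts along the chain-circuits in $O(1)$ rounds, yielding the claimed $O(\log m)$ total runtime.

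The main obstacle I expect is the bookkeeping that tells each amoebot, in each phase, whether it is still active and, if so, whether it plays the role of ``first'' or ``second'' in its pair. I would maintain a single-bit role marker that flips whenever an amoebot acts as ``second'' in a phase, and synchronize phase boundaries through $O(1)$ global-circuit beeps that signal the start and end of each phase. The remaining nuisance is the last pair when the current logical chain has odd length; this is resolved by letting a lone trailing amoebot carry its partial sum unchanged into the next phase, which only alters the chain length by a constant factor per phase and therefore preserves the $O(\log m)$ bound. Since every local action, every circuit reconfiguration, and every transmission uses only $O(1)$ bits and takes $O(1)$ rounds, the overall round complexity is dominated by the $\lceil \log m \rceil$ halving phases.
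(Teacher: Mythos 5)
The paper does not actually prove this theorem itself---it is imported from Feldmann et al.---but the mechanism behind it is the \pasc described in \cref{sec:pasc}, and comparing your proposal against that machinery exposes a genuine gap. Your reduction skeleton (pair consecutive active amoebots, fold partial sums modulo $k$, retire one amoebot per pair, repeat $\lceil\log m\rceil$ times, broadcast the result) is the right one, and the constant-size messages, the odd trailing amoebot, and the final broadcast are all unproblematic since $k$ is constant. The problem is the step you defer to ``bookkeeping'': each active amoebot must decide, in $O(1)$ rounds, whether it is at an even or odd position \emph{among the currently active amoebots}. This is a global property of the chain, not a local one, and your proposed role marker that ``flips whenever an amoebot acts as second'' is circular---to flip it, the amoebot must already know it is the second of its pair, and in the very first phase that knowledge is exactly the parity of its index, which no interior amoebot has. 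Propagating roles outward from $v_0$ costs $\Theta(\ell)$ rounds per phase for a current chain of length $\ell$, which destroys the $O(\log m)$ bound, and all interior amoebots are in symmetric states, so no purely local rule can break the tie deterministically.

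The primary-and-secondary-circuit construction is precisely the device that fills this hole: every amoebot keeps two partition sets, wired to its predecessor's sets \emph{crossed} if it is active and \emph{straight} if it is passive, so that a single beep sent by the reference amoebot on its primary circuit arrives at the \emph{secondary} partition set of exactly those active amoebots whose rank among the active amoebots is odd---in two rounds per iteration, independent of the chain length. Those amoebots become passive, which is your ``retire'' step, and the same wiring yields the point-to-point circuits between consecutive active amoebots over which the $O(\log k)=O(1)$-bit partial sums can be folded. If you substitute this construction for your role marker, your argument goes through and essentially coincides with the intended proof; without it, the claimed $O(1)$ rounds per phase is unjustified.
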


\begin{corollary}
\label{cor:iobt}
    A boundary set can determine whether it is an inner boundary set or the outer boundary set within $O(\log n)$ rounds w.h.p.
\end{corollary}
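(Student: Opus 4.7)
The plan is to combine the boundary traversal technique of Derakhshandeh et al.\ with the parallel summation primitive of \cref{th:addtree}. Following the standard right-hand-rule traversal, every boundary amoebot $u$ can locally enumerate its \emph{occurrences} on the boundary cycle: each maximal arc of empty neighbor slots around $u$ yields one occurrence, together with a pair of incident boundary edges that determine its predecessor and successor in the cycle. For an amoebot that appears $t$ times on the same boundary, we treat each of the $t$ occurrences as a separate node of the cycle. Amoebot $u$ then partitions its pins so that, for every occurrence, the pins on the two incident boundary edges form one partition set; the remaining pins are irrelevant. Hooking these partition sets together across neighboring boundary edges realizes each boundary set as its own circuit, and distinct boundary sets produce disjoint circuits.

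Next, I apply \cref{th:leaderelection} with candidate sets equal to the occurrences of each boundary set. In $O(\log n)$ rounds w.h.p., one occurrence per boundary is elected as a leader. The leader then splits its partition set into two singletons (the predecessor-side pin and the successor-side pin), converting the boundary cycle into a chain $C=(v_0,\ldots,v_{m-1})$ of occurrences whose two endpoints are the two halves of the leader. Each occurrence $v_i$ locally computes its turning value $x_i\in\{0,1,2,3,4\}$ as the number of $60^\circ$ counterclockwise steps from its predecessor-pin direction to its successor-pin direction, reduced modulo $5$; this is exactly the local turn contribution used in the inner--outer boundary test.

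Applying \cref{th:addtree} with $k=5$ to the chain $C$, the leader obtains $x=\sum_{i=0}^{m-1} x_i \bmod 5$ in $O(\log m)=O(\log n)$ rounds. Since the total turn around the outer boundary is $+360^\circ$ and around each inner boundary is $-360^\circ$, we have $x\equiv 1\pmod 5$ in the outer case and $x\equiv -1\equiv 4\pmod 5$ in the inner case, so the leader can decide the type of its boundary and beep the answer once along the same boundary circuit; every occurrence, and hence every boundary amoebot, learns the outcome in one additional round. The overall runtime is dominated by leader election and summation, giving $O(\log n)$ rounds w.h.p.

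The main obstacle is the careful treatment of non-simple cycles: the boundary can revisit an amoebot multiple times, and if the pins belonging to different occurrences of the same amoebot were merged into one partition set, the circuit would short-cut portions of the cycle and both the leader election and the summation would become incorrect. Splitting the pin set of each amoebot per occurrence, as described above, is what keeps the chain topology of \cref{th:addtree} intact and ensures that exactly one leader is elected per boundary component.
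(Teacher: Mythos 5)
Your proof is correct and follows essentially the same route as the paper's: elect a leader on each boundary circuit (via \cref{th:leaderelection}) to cut the cycle into a chain, then accumulate the $60^\circ$ turn counts modulo $5$ with \cref{th:addtree} to distinguish the outer total of $+360^\circ$ from the inner total of $-360^\circ$. The only difference is that you spell out the per-occurrence pin handling for non-simple boundary cycles, which the paper leaves implicit in this proof.
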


\begin{proof}%[Proof of \cref{cor:iobt}]
    In order to accelerate the inner outer boundary test, we apply Theorem~\ref{th:addtree}.
    Recall that each boundary set can organize itself into a cycle.
    We apply Theorem~\ref{th:leaderelection} to split the cycle into a chain.
    Each amoebot knows its predecessor and successor on the cycle and therefore also within the chain.
    Let $x_i$ denote the angle at amoebot $v_i$ and let $k = 5$.
    Finally, note that each boundary has $O(n)$ amoebots since each amoebot has at most three local boundaries.
    Thus, the sum of the angles can be accumulated after $O(\log n)$ rounds.
\end{proof}

\subparagraph*{Synchronization of Procedures}
\label{sec:synchronization}

Sometimes, we want to execute a procedure on different subsets in parallel, e.g., we apply the inner outer boundary test on all boundary sets at once.
The execution of the the same procedure may take different amounts of rounds for each subset.
The amoebots of a single subset are not able to decide when all subsets have terminated.
In order to synchronize the amoebot structure with respect to the procedures,
the amoebots periodically establish the global circuit.
The amoebots of each subset that has not yet terminated beep on that circuit.
The amoebot structure may proceed to the next procedure once the global circuit was not activated.

\section{Computing Identifiers}

% ========================================
% deprecated description
% ========================================

% In this section, we assign \emph{identifiers} in $\Z$ to the amoebots.
% The goal of each amoebot is to compute a \emph{two's complement representation} of its identifier as follows.
% %
% Let $\id(x)$ denote the identifier assigned to amoebot $x$.
% For a sufficiently large $k$,
% let $(x_{k-1}, \dots, x_0) = - x_{k-1} \cdot 2^{k-1} + \sum_{i = 0}^{k - 2} x_i \cdot 2^i$ denote a two's complement representation of $\id(x)$.
% For reasons of simplification, we use the terms identifier and representation interchangeable.
% %
% Consider $k$ rounds $r_0, \dots, r_{k-1}$ such that $r_{i-1} < r_{i}$ for $1 \leq i < k$.
% For $0 \leq i < k$, each amoebot $x$ computes $x_i$ in round $r_i$.

% ========================================

In this section, we assign \emph{identifiers} in $\Z$ to the amoebots.
Let $(x_{k-1}, \dots, x_0)$ denote the \emph{two's complement representation} of $- x_{k-1} \cdot 2^{k-1} + \sum_{i = 0}^{k - 2} x_i \cdot 2^i$.
By abuse of notation, we identify the identifiers with their two's complement representation.
Each amoebot computes a two's complement representation of its identifier.
In the first subsection, we compute successive identifiers along chains.
In the second subsection, we compute spatial identifiers with respect to a cardinal direction.
In the third subsection, we show two applications for the identifiers.
%\ap{Can this be shortened?}

\subsection{Successive Identifiers along the Chain}
\label{sec:idchain}
\label{sec:pasc}

In this section, we compute successive identifiers along a chain with respect to an amoebot
that we call the \emph{reference amoebot}.
Let $C = (u_0, \dots, u_{m-1})$ be a chain of amoebots.
Let $u_r$ be an arbitrary reference amoebot within the chain,
e.g., chosen by position (for example $r = 0$), or by a leader election.
We assign identifiers $\id_{C,u_r}$ according to the following rules:
\begin{align}
    \id_{C,u_r}(u_r) &= 0 \label{eq:id_chain_1} \\
    \id_{C,u_r}(u_{i+1}) &= \id_{C,u_r}(u_{i}) + 1 \text{ for } 1 \leq i < m \label{eq:id_chain_2}
\end{align}
Note that $\id_{C,u_r} (u_i) = i - r$.

% ========================================
% deprecated description
% ========================================

% In order to compute the identifiers,
% we utilize a subroutine by Feldmann et al.~\cite{FPSD21} used for Theorem~\ref{th:addtree}.
% Their subroutine works as follows (see Figure~\ref{fig:alg:id}).
% %
% They apply the primitive of primary and secondary circuits (see Section~\ref{sec:pasc}).
% Initially, all amoebots are active.
% They iteratively transform active amoebots into passive amoebots while keeping amoebot $u_r$ active.
% %
% An iteration consists of two rounds.
% At the beginning, each amoebot constructs its primary and secondary partition set.
% In the first round, amoebot $u_r$ activates its primary circuit.
% Each active amoebot that has received the beep on its secondary circuit beeps in the second round on its secondary circuit.
% These amoebots become passive amoebots in the next iteration.
% %
% The algorithm terminates when the second round is silent.
% At this point, amoebot $u_r$ is the remaining active amoebot.
% Feldmann et al. have proven that their subroutine terminates after $O(\log m)$ rounds \cite{FPSD21}.
% Henceforth, we refer to the subroutine as the \emph{primary and secondary circuit algorithm} (\pasc).

% ========================================

In order to compute the identifiers,
we utilize a procedure on the chain of amoebots proposed by Feldmann et al.~\cite{FPSD21}
that we henceforth refer to as the \emph{primary and secondary circuit algorithm} (\pasc).
Originally, the algorithm has been used as a subroutine for Theorem~\ref{th:addtree}.

In the following, we explain how the \pasc works (see Figure~\ref{fig:alg:id}).
So, let $C = (u_0, \dots, u_{m-1})$ be a chain of $m$ amoebots.
If an amoebot occurs multiple (but constantly many) times, it operates each position independently of each other
which is possible by using sufficiently many pins.
Each amoebot is either \emph{active} or \emph{passive}.
Initially, each amoebot is active.
The algorithm iteratively transforms active amoebots into passive amoebots while keeping amoebot $u_r$ active.

\begin{figure}[tb]
    \centering
%     \begin{subfigure}[b]{\textwidth}
%         \centering
%     \end{subfigure}
    \includegraphics[page=1]{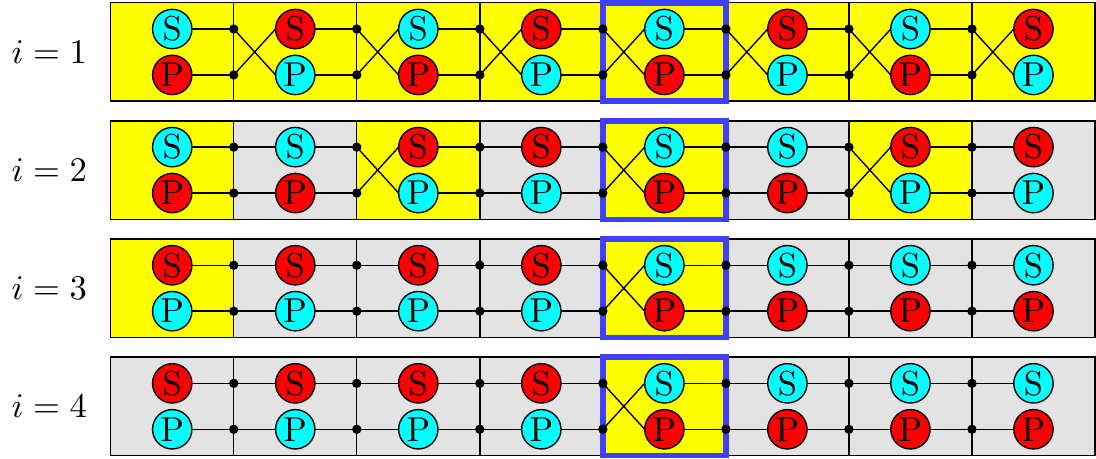}
    \caption{
        \pasc.
        Each figure shows the chain at the beginning of the $i$-th iteration.
%         The rectangles indicate amoebots.
        The circles indicate the primary (P) and secondary (S) partition sets.
%         Edges indicate connections between partition sets.
        The blue bordered amoebot denotes the reference amoebot $u_r$.
        Yellow amoebots are active, and gray amoebots are passive.
        All primary and secondary partition sets that are part of the primary resp.~secondary circuit of $u_r$ are depicted in red resp.~cyan.
    }
    \label{fig:alg:id}
\end{figure}

At the beginning of an iteration, the amoebots establish two circuits as follows.
Each amoebot has two partition sets that we call the \emph{primary and secondary partition set}.
We connect the primary and secondary partition sets by the following two rules%
\footnote{In comparison to \cite{FPSD21}, we have adjusted the rules by mirroring the connections.}
(see Figure~\ref{fig:alg:id}):
(i) If an amoebot is active, we connect its primary partition set to the secondary partition set of its predecessor, and its secondary partition set to the primary partition set of its predecessor.
(ii) If an amoebot is passive, we connect its primary partition set to the primary partition set of its predecessor, and its secondary partition set to the secondary partition set of its predecessor.
%\dw{Shorten?}

We obtain two circuits through all amoebots (see Figure~\ref{fig:alg:id}).
Each amoebot defines the circuit containing its primary partition set as its \emph{primary circuit}, and the circuit containing its secondary partition set as its \emph{secondary circuit}.
Clearly, we obtain two disjoint circuits along the chain.
We refer to \cite{FPSD21} for a detailed construction.

After establishing these circuits, the iteration utilizes two rounds.
In the first round, amoebot $u_r$ activates its primary circuit.
Each active amoebot that has received the beep on its secondary circuit beeps in the second round on its secondary circuit.
These amoebots become passive amoebots in the next iteration.
The algorithm terminates when the second round is silent.
At this point, amoebot $u_r$ is the remaining active amoebot.
Feldmann et al.~\cite{FPSD21} have proven the following lemma:

\begin{lemma}
\label{lem:pasc}
    The \pasc terminates in $\lceil\log m\rceil$ iterations, resp. $O(\log m)$ rounds.
\end{lemma}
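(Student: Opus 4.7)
The plan is to analyze the signal-propagation dynamics of the \pasc and show that the set of active amoebots (essentially) halves with each iteration. I would fix the reference amoebot $u_r$ and analyze one side of the chain, say the forward side $u_{r+1}, \dots, u_{m-1}$; the backward side is treated symmetrically. The key structural fact I would rely on is the one Feldmann et al.\ established: a passive amoebot passes the primary/secondary labeling through unchanged, whereas an active amoebot swaps primary with secondary on the predecessor side. Thus, starting from $u_r$'s primary partition set (where the beep is emitted), the signal stays on one circuit until it crosses an active amoebot, where it moves to the other circuit.

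Next, I would prove by induction on $i$ the following invariant: at the start of iteration $i \in \{1, \ldots, \lceil \log m \rceil + 1\}$, the active amoebots on the forward side of $u_r$ are exactly those $u_j$ with $(j-r) \equiv 0 \pmod{2^{i-1}}$. The base case $i=1$ is trivial since all amoebots start active. For the step, use the invariant to enumerate the active amoebots going outward as $v_0 = u_r, v_1, v_2, \dots$ at offsets $0, 2^{i-1}, 2\cdot 2^{i-1}, \dots$. Since the signal swaps exactly at each $v_k$, it reaches $v_k$ on the primary circuit when $k$ is even and on the secondary circuit when $k$ is odd. Consequently, only the odd-indexed active amoebots receive the beep on their secondary partition set and therefore become passive. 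The surviving active amoebots sit at offsets $0, 2^i, 2\cdot 2^i, \dots$, which is exactly the invariant for iteration $i+1$.

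For termination, note that once $2^i > \max(r, m-1-r)$—which holds for $i \geq \lceil \log m \rceil$—the only active amoebot is $u_r$ itself. In that iteration the signal stays on the primary circuit throughout the chain, no active amoebot receives a beep on its secondary partition set, and the second round is silent, so the algorithm stops. Hence at most $\lceil \log m \rceil$ iterations occur, and since each iteration reconfigures the pin configuration once and uses two beeping rounds, the total round complexity is $O(\log m)$.

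The main obstacle is purely bookkeeping: one must be careful that ``primary'' and ``secondary'' are consistent local labels and that the circuits remain two well-defined, disjoint circuits covering the whole chain under the reconfiguration rules, so that the swap/pass-through semantics used above are literally what the circuits implement. I would cite the construction of Feldmann et al.\ for this structural fact rather than rederive it, and focus the proof on the halving invariant described above.
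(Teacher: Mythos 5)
Your proposal is correct, but note that the paper itself does not prove this lemma at all: it is quoted verbatim from Feldmann et al.\ \cite{FPSD21} (``Feldmann et al.\ have proven the following lemma''), so there is no in-paper proof to match. What you supply is a self-contained derivation of the cited result, and it is the right one: the invariant that at the start of iteration $i$ the active amoebots are exactly those at offsets $\equiv 0 \pmod{2^{i-1}}$ from $u_r$, proved by tracing the swap-at-active / pass-through-at-passive semantics of the two circuits, is precisely the halving argument underlying the primitive. Two small points of care. First, on the backward side the flip occurs when the signal \emph{leaves} an active amoebot toward its predecessor (the connection between $u_j$ and $u_{j-1}$ is governed by $u_j$'s state), not when it enters one as on the forward side; the parity bookkeeping still comes out the same, but ``treated symmetrically'' hides this asymmetry, and in the final iteration the signal actually travels on the \emph{secondary} partition sets of the passive amoebots behind $u_r$ --- harmless, since silence in round two only requires that no \emph{active} amoebot hears a beep on its secondary set, and $u_r$ is the sole active amoebot and hears the beep on its primary set. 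Second, there is an off-by-one: after $\lceil\log m\rceil$ iterations only $u_r$ remains active, but the algorithm as specified needs one further (silent) iteration to detect termination; this does not affect the $O(\log m)$ round bound and matches the granularity at which the lemma is stated. Neither point is a gap; your argument is sound and, unlike the paper, actually establishes the claim.
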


We obtain the identifiers from the \pasc as follows.
Let $k = \lceil\log m\rceil$ be the number of iterations.
% Note that $k = \lceil\log m\rceil$.
For $0 \leq i < k$, let $r_i$ be the first round of the $(i + 1)$-st iteration.
Note that in each of these rounds, each amoebot of the chain receives a beep either on its primary circuit or its secondary circuit.
Hence, in round $r_i$, amoebot $x$ interprets a beep on the primary circuit as $x_i = 0$ and a beep on the secondary circuit as $x_i = 1$.

\begin{lemma}
\label{lem:id:subroutine}
    The \pasc computes $\id_{C,u_r}$.
\end{lemma}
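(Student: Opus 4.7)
I would prove the lemma by induction on the iteration index $i \ge 0$, maintaining two simultaneous invariants: (a) at the start of iteration $i+1$ the set of active amoebots is exactly $A_i = \{u_j : 2^i \mid (j - r)\}$, and (b) the bit $x_i$ registered by every $u_j$ in that iteration equals bit $i$ of $j - r$ in two's complement, i.e., $\lfloor (j-r)/2^i \rfloor \bmod 2$. Collecting (b) for $i = 0, \ldots, k-1$ then gives the two's-complement encoding of $j - r = \id_{C,u_r}(u_j)$ as claimed. The base case $i = 0$ is immediate since all amoebots start active and $2^0$ divides every integer.

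For the inductive step I would first analyze the beep propagation to establish (b). By the connection rules, the link between $u_{k-1}$ and $u_k$ is wired so that primary and secondary swap iff $u_k$ is active; consequently, the beep launched by $u_r$ on its primary circuit arrives at $u_j$'s primary partition set iff the number of currently-active amoebots traversed on the chain-path from $u_r$ to $u_j$ is even. Using the induction hypothesis $A_i = \{u_k : 2^i \mid (k-r)\}$, the relevant set is $\{u_k : r < k \le j,\ 2^i \mid (k-r)\}$ when $j > r$, and $\{u_k : j < k \le r,\ 2^i \mid (k-r)\}$ when $j < r$. A direct count for $j > r$ yields swap count $\lfloor (j-r)/2^i \rfloor$; for $j < r$, a short case split on whether $2^i$ divides $|j-r|$ shows that the swap count agrees in parity with the signed floor $\lfloor (j-r)/2^i \rfloor$, which is exactly bit $i$ of $j - r$ in two's complement.

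Given (b), invariant (a) for $i+1$ follows immediately: an amoebot $u_j \ne u_r$ stays active iff $u_j \in A_i$ and receives on primary, i.e., iff $2^i \mid (j-r)$ and $\lfloor (j-r)/2^i \rfloor$ is even, which is equivalent to $2^{i+1} \mid (j-r)$; and $u_r$ stays active by convention (consistent with $2^{i+1} \mid 0$). After $k = \lceil \log m \rceil$ iterations, $|j-r| < 2^k$ combined with $2^k \mid (j-r)$ forces $j = r$, so only $u_r$ is active, which is also consistent with \cref{lem:pasc}. The bits $x_0, \dots, x_{k-1}$ recorded by $u_j$ therefore form the two's-complement expansion of $\id_{C,u_r}(u_j)$.

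The step I expect to be the main obstacle is the backward-direction parity analysis. Because the swap at link $(u_{k-1}, u_k)$ is governed by the state of $u_k$, the counted active amoebots on the path from $u_r$ to $u_j$ with $j < r$ live in $\{u_{j+1}, \ldots, u_r\}$ rather than in the more symmetric $\{u_j, \ldots, u_{r-1}\}$; this asymmetry must be matched against the asymmetric behavior of $\lfloor \cdot / 2^i \rfloor$ on negative inputs, which requires a careful case split on whether $2^i$ divides $|j-r|$. Once that bookkeeping is resolved, the remainder of the induction is routine.
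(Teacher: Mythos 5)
Your proof is correct, but it takes a genuinely different route from the paper's. The paper argues purely locally: it fixes an amoebot $x$ and its successor $y$, notes that the link between them swaps primary and secondary exactly while $y$ is active, so $x$ and $y$ receive complementary bits up to and including the iteration in which $y$ turns passive and identical bits afterwards; this forces $x=(\dots,0,1,\dots,1)$ and $y=(\dots,1,0,\dots,0)$, which is precisely the binary-increment identity, and together with $u_r$ receiving all zeros this verifies the defining recurrence of $\id_{C,u_r}$ directly. You instead run a global induction over iterations, characterizing the entire active set as $\{u_j : 2^i \mid (j-r)\}$ and each bit as $\lfloor (j-r)/2^i\rfloor \bmod 2$ via a parity count of swapping links on the path from $u_r$ to $u_j$. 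The paper's argument is shorter and sidesteps the asymmetric floor/ceiling bookkeeping you correctly identify as the delicate point for $j<r$; your argument is heavier but proves a strictly stronger invariant, namely an explicit closed form for which amoebots survive each iteration --- a fact the paper itself silently relies on later (e.g.\ the block primitive's claim that exactly the amoebots $A_{ik}$ remain active). Both proofs share the same mild imprecision about overflow: for $r$ near $0$ the value $j-r$ need not fit in $\lceil\log m\rceil$-bit two's complement, so the computed identifier is really $j-r$ modulo $2^{\lceil\log m\rceil}$; since the paper's own proof treats the increment the same way, this is not a gap in your argument relative to the paper.
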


\begin{proof}[Proof of \cref{lem:id:subroutine}]
    Since the reference amoebot $u_r$ activates its primary circuit,
    it always receives a beep on its primary circuit.
    This implies $(u_r)_i = 0$ for $0 \leq i < k$.
    Thus, Equation~\ref{eq:id_chain_1} holds.
    
    Let $y$ be the successor of $x$.
    We have to show that Equation~\ref{eq:id_chain_2} holds,
    i.e., $\id_{C,u_r}(y) = \id_{C,u_r}(x) + 1$.
    We have two cases:
    (i) $y$ never becomes passive,
    and
    (ii) $y$ becomes passive.
    The first case directly implies $\id_{C,u_r}(x) = (1, \dots, 1) = -1$ and $\id_{C,u_r}(y) = (0, \dots, 0) = 0$.
    For the second case, let $l$ denote the iteration where $y$ becomes passive.
    Hence, $y$ has received a beep on its primary circuit in the first $l - 2$ iterations and a beep on its secondary circuit in the $(l - 1)$-st iteration, i.e., $y_{l-1} = 1$ and $y_i = 0$ for $0 \leq i < l - 1$.
    Since $y$ is active, $x$ has received a beep on its secondary circuit in the first $l - 2$ iterations and a beep on its primary circuit in the $(l - 1)$-st iteration, i.e., $x_{l-1} = 0$ and $x_i = 1$ for $0 \leq i < l - 1$.
    Since $y$ is passive from the $l$-th iteration, $x_i = y_i$ holds for $l \leq i < k$.
    We obtain
    \begin{align*}
    \id_{C,u_r}(x) &= (x_{k-1}, \dots, x_l, 0, 1, \dots, 1) \\
    \id_{C,u_r}(y) &= (y_{k-1}, \dots, y_l, 1, 0, \dots, 0) = (x_{k-1}, \dots, x_l, 0, 1, \dots, 1) + 1
    \end{align*}
    since $2^{l-1} = \sum_{i=0}^{l-2} 2^i + 1$.
\end{proof}

\subsection{Spatial Identifiers}
\label{sec:idspatial}

In this section, we compute identifiers relative to the spatial positions of the amoebots with respect to a cardinal direction $d \in D_m \cup D_p$ and a reference amoebot $u_r \in S$.
Let $d'$ denote the direction obtained if we rotate $d$ by $90^\circ$ counterclockwise, e.g., $d' = N$ for $d = E$.
%\ap{Define $\rho$ earlier!}
%
First, consider $d \in D_p$ (see Figure~\ref{fig:id_spatial_E_S}).
Let $\mathcal A_d = \{ \operatorname A(S,v,d') \mid v \in S \}$ denote a set of stripes.
We first assign identifiers to $\mathcal A_d$.
Afterwards, we extend these identifiers to the nodes in $S$.
% Afterwards, we extend these identifiers to the nodes in $S$.
% and with that to the amoebots in $S \subseteq V$.

Observe that $\mathcal A_d$ partitions $S$ into disjoint stripes.
These stripes form a chain $\mathcal C_d$ if we think of the stripes as nodes
such that two nodes are adjacent
if the corresponding stripes are neighbors (see Figure~\ref{fig:id_spatial_E_S}).
The order of the chain is given by the cardinal direction $d$:
The successor of a stripe is the neighboring stripe in direction $d$.
Let $\succr(A)$ denote the succeeding stripe of stripe $A$.
Let $A_r = \operatorname A(S,u_r,d')$.
We assign identifiers $\id_{\mathcal C_d,A_r}$ according to the following two rules:
\begin{align*}
    \id_{\mathcal C_d,A_r}(A_r) &= 0 \\ %\label{eq:id_spatial_1} \\
    \id_{\mathcal C_d,A_r}(\succr(A)) &= \id_{\mathcal C_d,A_r}(A) + 1 \text{ for all } A \in \mathcal A_d %\label{eq:id_spatial_2}
\end{align*}
Finally, we define $\id_{d,u_r}(v) = \id_{\mathcal C_d,A_r} (\operatorname A(S,v,d'))$ for all nodes $v \in S$.

\begin{figure}[tb]
    \begin{subfigure}[b]{0.28\textwidth}
        \centering
        \includegraphics[page=1]{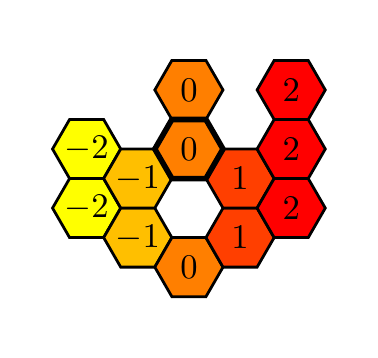}
        \caption{}
        \label{fig:id_spatial_E_S}
    \end{subfigure}
    \hfill
    \begin{subfigure}[b]{0.28\textwidth}
        \centering
        \includegraphics[page=1]{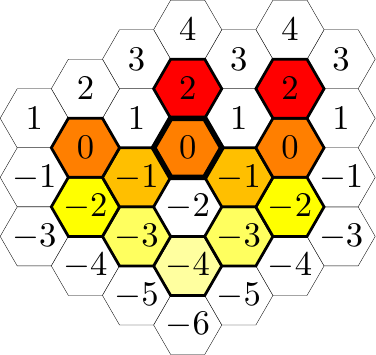}
        \caption{}
        \label{fig:id_spatial_N_S}
    \end{subfigure}
    \hfill
    \begin{subfigure}[b]{0.32\textwidth}
        \centering
        \includegraphics[page=1]{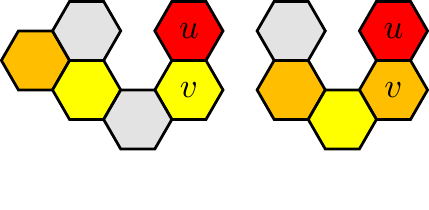}
        \caption{}
        \label{fig:id_spatial_pre}
    \end{subfigure}
    \caption{
        Spatial identifiers.
        (a) show the spatial identifiers with respect to $d = E$,
        and (b) with respect to $d = N$.
        Each color indicates a stripe (in $S$).
        The white hexagons indicate the neighborhood of $S$.
        The thick boundary indicates the reference amoebot $u_r$.
        (c) 
        If we do not include the neighborhood of $S$, then amoebot $u$ is unable to determine whether amoebot $v$ belongs to the preceding stripe (orange) or the stripe preceding its preceding stripe (yellow).
    }
    \label{fig:id_spatial}
\end{figure}

In order to compute the identifiers,
we transfer the concept of primary and secondary circuits from a chain of amoebots to a chain of stripes (compare with Section~\ref{sec:pasc}):
From a global perspective, each stripe knows its predecessor and its successor, is either active or passive, and has a primary and secondary partition set.
The primary and secondary partition sets are connected by the following rules:
If a stripe is active, its primary partition set is connected to the secondary partition set of its predecessor, and its secondary partition set is connected to the primary partition set of its predecessor.
If a stripe is passive, its primary partition set is connected to the primary partition set of its predecessor, and its secondary partition set is connected to the secondary partition set of its predecessor.
%\dw{Shorten?}
There are no further connections.
In order to keep the amoebots within a stripe synchronized, we additionally require that each amoebot has access to the primary and secondary partition set of its stripe.
In the following, we explain how to construct the circuits that satisfy the aforementioned properties.

Note that the neighborhood of each amoebot only contains amoebots of the same stripe, the preceding stripe, and the succeeding stripe.
Since we assume common compass orientation and chirality,
each amoebot is able to determine to which stripe each neighbor belongs.
We now define pin configurations that satisfy the aforementioned properties.
Each pin configuration has two partition sets that we call the primary and secondary partition set.
We connect the primary and secondary partition sets of an amoebot to the primary and secondary partition sets of adjacent amoebots
such that the aforementioned properties are reflected locally.
For example, consider two adjacent amoebots $u$ and $v$ such that $u$ belongs to an active stripe and $v$ belongs to $u$'s preceding stripe.
We connect $u$'s primary partition set to $v$'s secondary partition set,
and $u$'s secondary partition set to $v$'s primary partition set.
Since we assign the same identifiers to amoebots of the same stripe,
we connect their primary and secondary partition sets, respectively.
We obtain two pin configurations: one for amoebots of active stripes and one for amoebots of passive stripes (see Figure~\ref{fig:config}).

\begin{figure}[tb]
    \centering
    \begin{subfigure}[b]{0.16\textwidth}
        \centering
        \includegraphics[page=1]{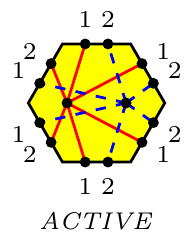}
%         \caption*{$\mathit{ACTIVE}$}
    \end{subfigure}
    \hfill
    \begin{subfigure}[b]{0.16\textwidth}
        \centering
        \includegraphics[page=1]{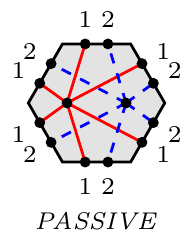}
%         \caption*{$\mathit{PASSIVE}$}
    \end{subfigure}
    \hfill
    \begin{subfigure}[b]{0.16\textwidth}
        \centering
        \includegraphics[page=1]{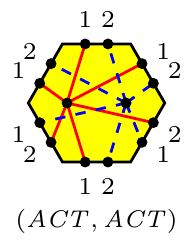}
%         \caption*{$\mathit{(ACTIVE,ACTIVE)}$}
    \end{subfigure}
    \hfill
    \begin{subfigure}[b]{0.16\textwidth}
        \centering
        \includegraphics[page=1]{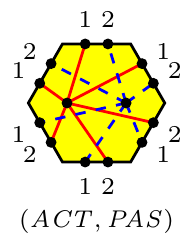}
%         \caption*{$\mathit{(ACTIVE,PASSIVE)}$}
    \end{subfigure}
    \hfill
    \begin{subfigure}[b]{0.16\textwidth}
        \centering
        \includegraphics[page=1]{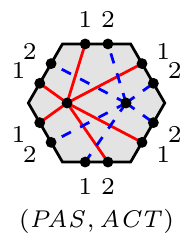}
%         \caption*{$\mathit{(PASSIVE,ACTIVE)}$}
    \end{subfigure}
    \hfill
    \begin{subfigure}[b]{0.16\textwidth}
        \centering
        \includegraphics[page=1]{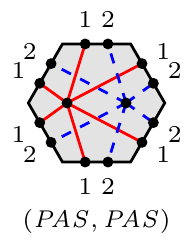}
%         \caption*{$\mathit{(PASSIVE,PASSIVE)}$}
    \end{subfigure}
    \caption{
        Utilized pin configurations.
        We utilize the former two pin configurations for $d = E$.
        We utilize the latter four pin configurations for $d = N$.
        The first argument denotes the state of the amoebot's stripe
        and the second argument denotes the state of the preceding stripe, respectively.
    }
    \label{fig:config}
\end{figure}

\begin{lemma}
\label{lem:id:construction}
    The construction satisfies the aforementioned properties.
\end{lemma}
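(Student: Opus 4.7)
The plan is to verify the four global properties listed just before the lemma by a direct inspection of the local pin configurations in Figure~\ref{fig:config}, split into three steps.

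First, I would establish the intra-stripe structure. For both $d \in D_p$ and $d \in D_m$, any two consecutive amoebots within the same stripe $A$ share an edge of $\Geqt$, and in every one of the configurations of Figure~\ref{fig:config} such an edge directly joins the two primary partition sets of the adjacent amoebots and, independently, the two secondary partition sets. A simple induction along the (connected) amoebot chain forming $A$ then shows that all primaries in $A$ lie in a single component of the pin-configuration graph, and likewise for all secondaries. This simultaneously yields the last property (common access inside a stripe) and endows each stripe $A$ with a well-defined global primary and secondary partition set.

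Second, I would verify the inter-stripe rule. Fix a stripe $A$ and its predecessor $A' = \succr^{-1}(A)$, and consider an edge $\{u, w\}$ of $\Geqt$ with $u \in A$ and $w \in A'$. The configuration assigned to $u$ on this edge is determined by the active/passive state of $A$ and, for $d \in D_m$, also by the active/passive state of $A'$. I would go through each of the two (for $d \in D_p$) or four (for $d \in D_m$) configurations in Figure~\ref{fig:config} and confirm that the local wiring links $u$'s primary to $w$'s secondary (and $u$'s secondary to $w$'s primary) precisely when $A$ is active, and links primaries with primaries and secondaries with secondaries precisely when $A$ is passive. Combined with the intra-stripe identification established in the first step, this matches exactly the abstract rule defined on $\mathcal C_d$.

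Finally, I would argue that no spurious connections arise. Each amoebot owns only a primary and a secondary partition set, and the only $\Geqt$-neighbors it can have lie in its own stripe, in $\succr^{-1}(A)$, or in $\succr(A)$. Since each of these categories has been handled above, the global pin configuration restricted to any maximal chain of stripes has exactly the two components the construction aims for. I expect the main obstacle to be the $d \in D_m$ case, where an amoebot in $A$ has two distinct neighbors in $A'$ rather than one: to make these two links cohere as a single rule depending only on the state of $A'$, the configuration of $u$ must depend not only on the state of $A$ but also on the state of $A'$, which is exactly why four local configurations are needed in Figure~\ref{fig:config}. Verifying that these four configurations realize the rule in all active/passive combinations is a finite combinatorial check, but it is the step that must be carried out with care.
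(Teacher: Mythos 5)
There is a genuine gap in your first step. You build the stripe-level primary and secondary partition sets by ``a simple induction along the (connected) amoebot chain forming $A$'', but a stripe $\operatorname A(S,v,d')$ need not be connected (the paper points this out explicitly when defining stripes). When a stripe falls into several pieces, its primary partition sets can only end up on a common circuit via paths of amoebots that leave the stripe, pass through neighboring (active or passive) stripes, and return. For that to work you must check a consistency property your proposal never states: the way two amoebots' primary/secondary partition sets get linked must be independent of the connecting path chosen, i.e.\ crossing into a predecessor stripe and back must induce the identity on $\{\text{primary},\text{secondary}\}$ whatever the active/passive states encountered, and more generally any closed walk must induce the identity. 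Without this, different paths between two pieces of the same stripe could identify primary with secondary and merge the two circuits into one, which would break the \pasc. Your third step (``no spurious connections'') inspects only the local neighborhood categories and does not rule this out.

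The paper's proof is organized precisely around this issue: it first analyzes the infinite structure $S=V$, where every stripe is a full connected line and the two circuits are easy to exhibit, then observes that any two amoebots are connected ``in the same fashion'' along every path, so deleting amoebots cannot separate or merge the circuits as long as $G_S$ stays connected. Your local case analysis of Figure~\ref{fig:config} (including the correct observation that $d\in D_m$ forces the configuration to depend on the predecessor's state) is a necessary ingredient, but it must be supplemented by this path-independence or infinite-structure reduction to cover disconnected stripes.
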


\begin{proof}%[Proof of \cref{lem:id:construction}]
    For the sake of analysis, we first consider an infinite amoebot structure where $S = V$.
    Afterwards, we transfer the results to arbitrary connected amoebot structures.
    
    Consider a single stripe.
    Let $\mathit{ACTIVE}$ denote the pin configuration used for amoebots of active stripes,
    and $\mathit{PASSIVE}$ denote the pin configuration used for amoebots of passive stripes.
    Both pin configurations define a primary and secondary partition set.
    All amoebots within the stripe connect their primary and secondary partition sets, respectively (see Figure~\ref{fig:psc:construction}).
%     We obtain the primary and secondary partition set of the stripe.
    We define the union of all primary resp. secondary partition sets within a stripe as the primary resp. secondary partition set of the stripe.
    Note that each amoebot has access to both partition sets and is able to distinguish between them.
    
    Next, consider the connections to the preceding stripe.
    The connections within the pin configuration $\mathit{ACTIVE}$ are selected in such a way that an active stripe connects its primary partition set exclusively to the secondary partition set of the preceding stripe, and its secondary partition set exclusively to the primary partition set of the preceding stripe (see Figure~\ref{fig:psc:construction}).
    Similar, the connections within the pin configuration $\mathit{PASSIVE}$ are selected in such a way that an active stripe connects its primary partition set exclusively to the primary partition set of the preceding stripe, and its secondary partition set exclusively to the secondary partition set of the preceding stripe (see Figure~\ref{fig:psc:construction}).
    Note that there are no further connections.
    
    The crucial property of our construction is that any two amoebots are connected by any arbitrary path of amoebots in the same fashion,
    i.e., either both primary partition sets are connected to the secondary partition set of the other amoebot, or their primary and secondary partition sets are connected, respectively (see Figure~\ref{fig:psc:reduction:a}).
    This allows us to remove amoebots without separating the circuits as long as the amoebot structure stays connected (see Figure~\ref{fig:psc:reduction:b}).
\end{proof}

\begin{figure}[tb]
    \centering
%     \begin{subfigure}[b]{0.3\textwidth}
%         \centering
%         \includegraphics[page=1]{fig/fig_stripe_a}
%     \end{subfigure}
%     \hfill
%     \begin{subfigure}[b]{0.3\textwidth}
%         \centering
%         \includegraphics[page=1]{fig/fig_stripe_p}
%     \end{subfigure}
%     \hfill
    \begin{subfigure}[b]{0.24\textwidth}
        \centering
        \includegraphics[page=1]{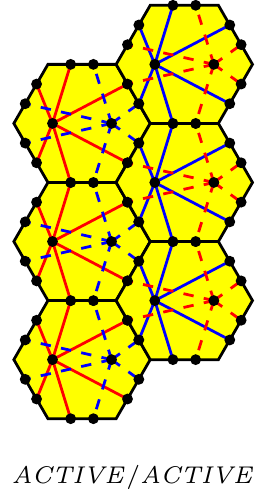}
    \end{subfigure}
    \hfill
    \begin{subfigure}[b]{0.24\textwidth}
        \centering
        \includegraphics[page=1]{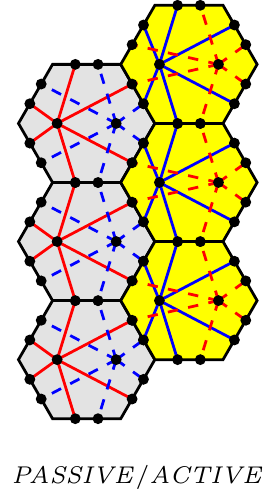}
    \end{subfigure}
    \hfill
    \begin{subfigure}[b]{0.24\textwidth}
        \centering
        \includegraphics[page=1]{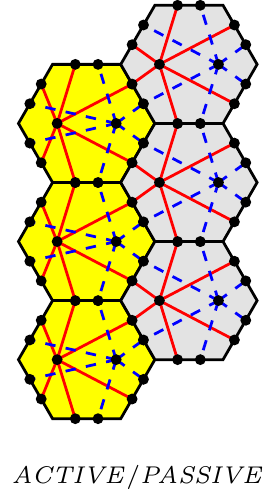}
    \end{subfigure}
    \hfill
    \begin{subfigure}[b]{0.24\textwidth}
        \centering
        \includegraphics[page=1]{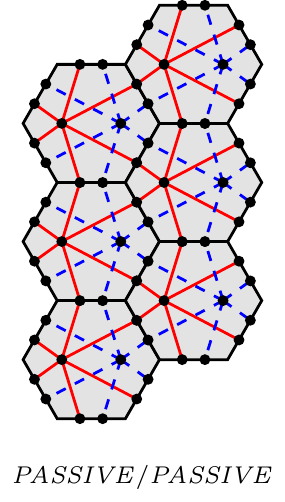}
    \end{subfigure}
    \caption{
        Sections of various stripes of an infinite amoebot structure.
    }
    \label{fig:psc:construction}
\end{figure}

\begin{figure}[tb]
    \centering
    \begin{subfigure}[b]{0.45\textwidth}
        \centering
        \includegraphics[page=1]{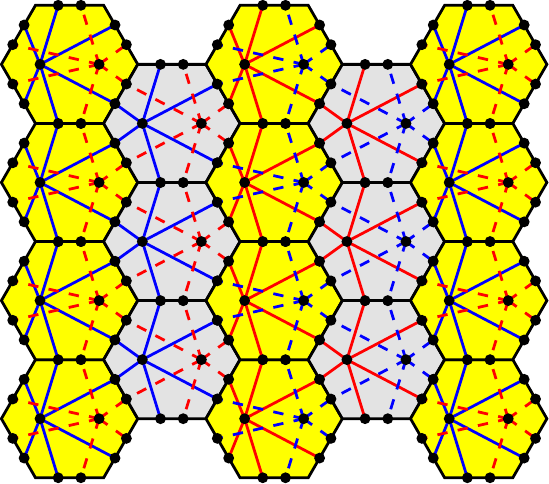}
        \caption{}
        \label{fig:psc:reduction:a}
    \end{subfigure}
    \hfill
    \begin{subfigure}[b]{0.45\textwidth}
        \centering
        \includegraphics[page=1]{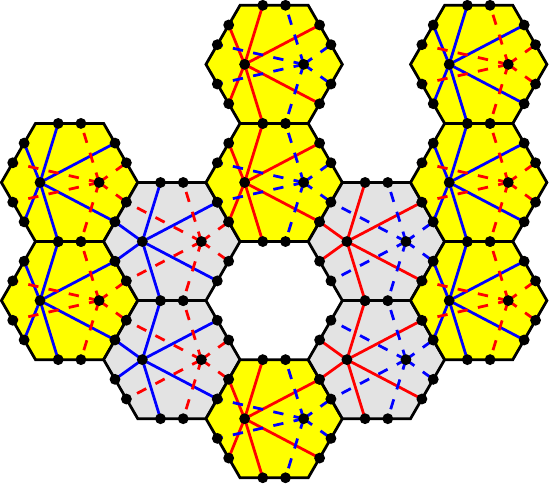}
        \caption{}
        \label{fig:psc:reduction:b}
    \end{subfigure}
    \caption{
        (a) A section of the infinite amoebot structure.
        (b) An arbitrary amoebot structure.
        The connectivity of the remaining amoebots is preserved.
    }
    \label{fig:psc:reduction}
\end{figure}

Now, we simply apply the \pasc on the chain of stripes to compute $\id_{\mathcal C_d,A_r}$, i.e., each amoebot $v \in S$ computes $\id_{\mathcal C_d,A_r} (\operatorname A(S,v,d'))$ that equals $\id_{d,u_r}(v)$ by definition.

\begin{lemma}
\label{lem:id:spatial}
    The \pasc computes $\id_{d,u_r}$ for $d \in D_p$.
\end{lemma}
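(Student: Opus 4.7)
The plan is to reduce the claim to an application of Lemma~\ref{lem:id:subroutine} on the chain of stripes $\mathcal C_d$, using the pin configuration construction of Figure~\ref{fig:config} as a faithful simulation of the standard \pasc wiring on that chain.

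First, I would invoke Lemma~\ref{lem:id:construction} to treat each stripe $A \in \mathcal A_d$ as a single ``super-amoebot'' whose primary (resp.~secondary) partition set is the union of the primary (resp.~secondary) partition sets of its constituent amoebots. The lemma already certifies two crucial facts: (i) these unions form globally consistent partition sets, so within a stripe every amoebot has access to both unified partition sets and therefore observes identical beep patterns in every round; and (ii) the connections between consecutive stripes obey the \pasc rules exactly, namely an active stripe's primary partition set is connected only to the secondary partition set of its predecessor and vice versa, while a passive stripe connects primary-to-primary and secondary-to-secondary.

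Second, I would run the \pasc on the chain $\mathcal C_d = (A_0, \dots, A_{m-1})$ with $A_r = \operatorname A(S, u_r, d')$ as the reference stripe, which is identifiable because $u_r$ can locally announce its membership along its own stripe. In each iteration, the amoebots of $A_r$ beep on the stripe's primary partition set; any amoebot in an active stripe that receives the beep on the secondary circuit causes its whole stripe to switch to passive in the following iteration (the switch is consistent stripe-wide by the synchronization property above). By Lemma~\ref{lem:pasc} the procedure terminates after $O(\log m)$ rounds, and by Lemma~\ref{lem:id:subroutine} applied to the chain-of-stripes, every stripe $A$ ends up with the two's complement bits of $\id_{\mathcal C_d, A_r}(A)$.

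Finally, since $\id_{d, u_r}(v) = \id_{\mathcal C_d, A_r}(\operatorname A(S, v, d'))$ by definition, each amoebot $v$ simply reads off $\id_{d, u_r}(v)$ from the bits it has recorded. The main subtle point has already been absorbed into Lemma~\ref{lem:id:construction}: the local pin configurations must induce globally consistent primary/secondary circuits across the entire chain of stripes even when $S$ has an irregular shape, and the amoebots within one stripe must stay synchronized so that the ``super-amoebot'' abstraction is faithful. With that lemma in hand, the present claim is an immediate consequence of Lemma~\ref{lem:id:subroutine} applied to $\mathcal C_d$.
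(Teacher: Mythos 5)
Your proof is correct and follows essentially the same route as the paper's: invoke Lemma~\ref{lem:id:construction} to justify running the \pasc on the chain of stripes, apply Lemma~\ref{lem:id:subroutine} to that chain, and handle the one remaining subtlety --- that all amoebots of a stripe agree on whether the stripe is active or passive --- via the fact that every amoebot has access to its stripe's secondary partition set and hence observes the beep that triggers the switch to passive. No gaps.
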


\begin{proof}%[Proof of \cref{lem:id:spatial}]
    By Lemma~\ref{lem:id:construction},
    we can apply the primitive of primary and secondary circuits to the chain of stripes
    as long as each amoebot knows whether its stripe is active or passive.
    This implies that we can perform the \pasc
    that by Lemma~\ref{lem:id:subroutine}, computes $\id_{\mathcal C_d,A_r}$.
    
    It remains to show that each amoebot knows whether its stripe is active or passive throughout the execution of the algorithm.
    Initially, this is trivially true since each stripe is active.
    A stripe becomes passive once it receives a beep on its secondary circuit.
    Each amoebot of the stripe can observe this beep
    since by construction, each amoebot has access to the secondary partition set of its stripe.
    Thereafter, the stripe stays passive.
\end{proof}

% ========================================
% deprecated description
% ========================================

% Next, consider $d \in D_m$ (see Figures~\ref{fig:id_spatial_N_V} and~\ref{fig:id_spatial_N_S}).
% We have to make the following two adjustments in comparison to $d \in D_p$.
% %
% First, the neighborhood of an amoebot can also contain amoebots
% of the stripe preceding the preceding stripe,
% % the preceding stripe,
% % the succeeding stripe,
% and the stripe succeeding the succeeding stripe.
% So, we are not able to define the pin configurations in the same way.
% %
% However, each amoebot is able to track whether the preceding stripe is active or passive
% since it knows the connections between its stripe and its preceding stripe.
% Hence, it is able to determine the connections between the preceding stripe and the stripe preceding its stripe
% and with that the connections between its stripe and the stripe preceding its stripe.
% This results into four pin configurations (see Figure~\ref{fig:config}).
% 
% Second, there may be gaps between the identifiers (see Figure~\ref{fig:id_spatial_N_S}).
% These become problematic if we expect an amoebot in the gap to beep,
% e.g., if each amoebot has an even identifier, no amoebot will become passive within the first iteration of the \pasc
% which results in a premature termination.
% However, recall that each amoebot tracks when the preceding stripe becomes passive.
% Hence, it can take over the task for an eventually missing amoebot in the preceding stripe.

% ========================================

Next, consider $d \in D_m$.
We discuss the necessary modifications in comparison to $d \in D_p$.
Note that an amoebot is unable to locally determine to which stripe its neighbors belong (see Figure~\ref{fig:id_spatial_pre}).
We therefore perform the \pasc on the union of the amoebot structure and its neighborhood (see Figure~\ref{fig:id_spatial_N_S}).
Note that each neighbor of an amoebot $v \in S$ belongs either to one of the two preceding stripes or to one of the two succeeding stripe.

Each amoebot tracks when the stripes of its neighbors become passive as follows.
Recall that all stripes are initially active.
Suppose that each amoebot $v \in S$ knows whether (its stripe and) the stripes of its neighbors are active or passive at the beginning of an iteration of the \pasc.
Hence, $v$ also knows how these stripes are interconnected.
Thus, $v$ can conclude from the signal it receives on the signals received by the stripes of its neighbors and with that whether these become passive.

There are two reasons for the tracking.
First, some of the stripes are not occupied by any amoebots.
The tracking allows each amoebot $v \in S$ to activate the correct circuits for each of its neighbors.
Second, the connections between an amoebot and its neighbor of the stripe preceding its preceding stripe depends on the states of its stripe and its preceding stripe.
We obtain four pin configurations (see Figure~\ref{fig:config}).

\begin{lemma}
\label{lem:id:spatial2}
    The \pasc computes $\id_{d,u_r}$ for $d \in D_m$.
\end{lemma}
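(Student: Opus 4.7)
The plan is to reduce the claim to the already-established \cref{lem:id:spatial} by showing that, even for $d \in D_m$, the local pin configurations still realize the primary/secondary circuits on the chain of stripes $\mathcal C_d$, so that the PASC algorithm can be run on this chain and by \cref{lem:id:subroutine} yields $\id_{\mathcal C_d, A_r}$, which coincides with $\id_{d,u_r}$ by definition.

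First, I would make precise the setup. For $d \in D_m$, I consider the chain of stripes obtained from $\mathcal A_d$ after restricting the underlying host set to $S \cup N(S)$ (with $N(S)$ the neighborhood in $\Geqt$). This enlargement is needed because two amoebots of $S$ on the same axis through direction $d$ may be connected by a path that only passes through stripes that contain no amoebot of $S$ at all (cf.~\cref{fig:id_spatial_pre}); including $N(S)$ guarantees that every stripe relevant to the chain has at least one hosting amoebot that can install the required partition sets. Since $S$ is connected, $S \cup N(S)$ remains connected, and by the same topological argument used in \cref{lem:id:construction} the four pin configurations shown in \cref{fig:config} paste together into a primary and secondary circuit per stripe, with exactly the crossing rule required by PASC (primary-to-secondary for an active stripe meeting its predecessor, primary-to-primary otherwise).

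Next, I would establish the tracking invariant by induction on the iteration number: at the start of every PASC iteration, each amoebot $v \in S$ knows the active/passive status of its own stripe and of each of the (at most four) stripes containing one of its neighbors. The base case is immediate since all stripes start active. For the inductive step, recall that by construction $v$ has access to the primary and secondary partition sets of its own stripe, so it observes the same beeps as that stripe; moreover, since $v$ knows by the inductive hypothesis which circuits its neighboring stripes contribute to in the current iteration, it can deduce from the pattern of beeps it sees locally whether each of those neighboring stripes received a beep on its secondary circuit, and hence whether that stripe becomes passive. Once a stripe is passive it remains passive, so the invariant carries over to the next iteration; in particular, every amoebot can always select the correct one of the four pin configurations.

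With this invariant in hand, the remainder is immediate: the construction locally implements exactly the primary/secondary circuit setup of \cref{lem:id:construction} for the chain $\mathcal C_d$, and so running PASC on this chain of stripes by \cref{lem:pasc} terminates in $O(\log m)$ rounds and by \cref{lem:id:subroutine} assigns $\id_{\mathcal C_d, A_r}$ to each stripe; every amoebot $v \in S$ reads out $\id_{\mathcal C_d, A_r}(\operatorname A(S,v,d')) = \id_{d,u_r}(v)$. The main obstacle I anticipate is the inductive tracking step: an amoebot must infer the status of a neighboring stripe without any direct signal from it, relying solely on the beep pattern observed on its own pins together with knowledge of how the neighboring pin configuration is currently wired. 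Carefully distinguishing the two kinds of neighbors (adjacent stripe vs.~stripe preceding the preceding one) and verifying that the four pin configurations in \cref{fig:config} jointly render the relevant beep unambiguously observable to $v$ is where the bulk of the bookkeeping lies.
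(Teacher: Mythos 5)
Your proposal is correct and follows essentially the same route as the paper: the paper's proof is a one-line appeal to the argument of \cref{lem:id:spatial}, with the substantive modifications (running the \pasc on the union of $S$ and its neighborhood, the four pin configurations of \cref{fig:config}, and the inductive tracking of the active/passive status of neighboring stripes) laid out in the text immediately preceding the lemma, and you reproduce exactly these ideas, merely making the tracking invariant and its induction more explicit.
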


\begin{proof}%[Proof of \cref{lem:id:spatial2}]
    The proof works analogously to the one of Lemma~\ref{lem:id:spatial}.
\end{proof}

Note that we can generalize this technique to arbitrary directions as long as the identifiers within a neighborhood only differ by some constant.

\subsection{Applications}
\label{sec:idapp}

We now consider two applications for the identifiers, namely the stripe problem and the global maxima problem.

First, consider the stripe problem.
Our stripe algorithm simply executes the \pasc with $u$ as the reference amoebot, i.e., $u_r = u$.
Note that this sets the identifier of $u$ to 0, i.e., $\id_{d,u}(u) = 0$.
By construction, $\id_{d,u}(v) = \id_{d,u}(u) = 0$ holds for all $v \in \operatorname A(S,u,d)$.
We obtain the following theorem.

\begin{theorem}
\label{th:stripe}
    The stripe algorithm solves the stripe computation problem in $O(\log n)$ rounds.
\end{theorem}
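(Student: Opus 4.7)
The plan is to reduce the theorem directly to the spatial identifier construction of Section~\ref{sec:idspatial}. I would invoke the \pasc with $u_r = u$ and with the directional parameter chosen so that the level sets of the computed identifier coincide with the axes in direction $d$. Each amoebot $v$ then declares its membership in $\operatorname A(S, u, d)$ iff every bit of its two's complement identifier is zero; it can perform this test on the fly by examining the beep it receives in each iteration of the \pasc.

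For correctness I would chain \cref{lem:id:subroutine} with \cref{lem:id:spatial,lem:id:spatial2}. Because $u$ is the reference amoebot, it always receives its beep on the primary circuit, so its identifier is $(0, \dots, 0) = 0$. Every other stripe in the induced chain of stripes sits at a nonzero offset from $u$'s stripe, and the \pasc encodes that offset as a nonzero two's complement value. Hence the computed identifier vanishes at $v$ precisely when $v$ and $u$ lie on the same $d$-axis, which is the characterization required by the stripe computation problem.

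For the runtime, the chain of stripes has length at most $n$, so \cref{lem:pasc} guarantees that the \pasc terminates within $\lceil \log n \rceil$ iterations, each of which uses a constant number of synchronized rounds. The pin configurations needed to simulate the chain-of-stripes construction (two configurations for $d \in D_p$, four for $d \in D_m$, as in \cref{fig:config}) fit within the two pins per edge listed for this problem in \cref{tab:results}. Summing up yields the claimed $O(\log n)$ bound.

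The only subtle point I anticipate is matching directional conventions: the spatial identifier is constant along axes perpendicular to its direction parameter in the notation of Section~\ref{sec:idspatial}, so one must invoke the construction with the parameter appropriately rotated so that its level sets are exactly the $d$-axes demanded by the stripe problem. Once the conventions are aligned, there is no further work beyond citing the lemmas already in place.
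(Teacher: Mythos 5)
Your proposal matches the paper's own argument: the stripe algorithm is exactly one run of the \pasc on the chain of stripes with $u_r = u$, membership is read off as ``identifier equal to $0$,'' and the $O(\log n)$ bound follows from \cref{lem:pasc}. Your remark about rotating the directional parameter so that the level sets of the spatial identifier are the $d$-axes is a valid (and slightly more careful) reading of the convention the paper glosses over; no further changes are needed.
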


Next, consider the global maxima problem.
Recall that we compute the global maxima of a set $R \subseteq S$ (see Section~\ref{sec:problem}).
By construction, 
$\argmin_{w \in R} \operatorname f_d(R,w) = \argmax_{w \in R} \id_{d,u_r}(w)$
holds for any reference amoebot $u_r$.
The idea of our global maxima algorithm is therefore to execute the \pasc and to determine the highest identifier.
By choosing an $u \in R$ as the reference amoebot, i.e., $u_r = u$, we ensure that the maximal identifier is non-negative.
In order to determine the maximum of non-negative numbers, we apply the consensus algorithm by Feldmann et al.~\cite{FPSD21} that agrees on the highest input value.
First, the amoebot structure establishes the global circuit (see Section~\ref{sec:prelim}).
Each amoebot $v \in R$ with $\id(v) \geq 0$ transmits its identifier starting from the most significant bit.
If a transmitting amoebot observes a beep in a round it does not beep, it stops its transmission.
Only an amoebot with the highest identifier is able to transmit its identifier until the end.

However, since each amoebot can only store a constant section of its identifier, and since the \pasc provides the identifiers from the least significant bit to the most significant bit, we have to partially recompute the identifiers after each bit.
In order to identify the correct bit, we simply use two binary counters
where the first counter indicates the current bit,
and the second counter the current iteration of the \pasc.
These can be realized along a chain
such that with the help of circuits, the incrementation, the decrementation, and the comparison only require $O(1)$ rounds.
Using a chain along the outer boundary set ensures a  sufficient length of the counters.
We obtain the following theorem.

\begin{theorem}
\label{th:maxima}
    The global maxima algorithm computes the global maxima in $O(\log^2 n)$ rounds w.h.p.
\end{theorem}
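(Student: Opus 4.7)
The plan is to combine the spatial identifiers produced by the \pasc with a bit-by-bit maximum-consensus on the global circuit, restarting the \pasc once per output bit to circumvent the constant-memory limitation. I proceed in three stages.

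First, I elect a reference amoebot $u_r \in R$ by applying Theorem~\ref{th:leaderelection} with the single candidate set $C_1 = R$ on the global circuit; this takes $O(\log n)$ rounds w.h.p. By Lemmas~\ref{lem:id:spatial} and~\ref{lem:id:spatial2}, running the \pasc with reference $u_r$ assigns every $v \in S$ a spatial identifier $\id_{d,u_r}(v)$. Since identifiers strictly increase along $d$ from stripe to stripe, we have $\operatorname f_d(R,w) = |\{w' \in R : \id_{d,u_r}(w') > \id_{d,u_r}(w)\}|$ and therefore $\argmin_{w \in R} \operatorname f_d(R,w) = \argmax_{w \in R} \id_{d,u_r}(w)$. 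Because $u_r \in R$ contributes identifier~$0$, the maximum over $R$ is non-negative, so its two's-complement sign bit is~$0$, and we can view the relevant identifiers as ordinary unsigned numbers.

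Second, I run the standard max-consensus primitive of Feldmann et al.\ on the global circuit: for $k = \lceil \log n \rceil + 1$ rounds, taken in MSB-to-LSB order, every still-surviving candidate $v \in R$ with $\id_{d,u_r}(v) \geq 0$ beeps iff its current bit equals~$1$ and drops out the first round in which it does not beep but hears one. After $k$ rounds the survivors are exactly $\argmax_{w \in R} \id_{d,u_r}(w)$, i.e.\ the global maxima of $R$ with respect to $d$.

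The key technical obstacle — and the reason for the $O(\log^2 n)$ cost — is that amoebots have only $O(1)$ memory, whereas each identifier has $\Theta(\log n)$ bits and the \pasc emits them in LSB-to-MSB order while consensus consumes them MSB-first. I resolve this by restarting the \pasc once per consensus bit: to produce the bit needed for consensus round $j$, execute exactly $k-j$ iterations of the \pasc and have every amoebot read off the bit delivered in the final iteration via the primary/secondary beep rule of Lemma~\ref{lem:id:subroutine}, feed it immediately into consensus round $j$, and then restart. The two indices ``current consensus round'' and ``current \pasc iteration'' are each stored as a binary counter laid out as a bit string along the outer-boundary chain (identified via \cref{cor:iobt}), which has sufficient length since the outer boundary of an $n$-amoebot structure contains $\Omega(\sqrt{n}) = \Omega(\log n)$ amoebots. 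Using circuits along this chain, increment, decrement, and comparison of a counter each cost $O(1)$ rounds, and parallel subroutines are coordinated by the global-circuit termination signal of Section~\ref{sec:synchronization}. Summing, there are $O(\log n)$ consensus bits, each costing one partial \pasc of $O(\log n)$ rounds plus $O(1)$ rounds of bookkeeping, for $O(\log^2 n)$ rounds overall; the only randomness is in the initial leader election, so the bound holds w.h.p.
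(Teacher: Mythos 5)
Your proposal matches the paper's algorithm essentially step for step: the reduction $\argmin_{w\in R}\operatorname f_d(R,w)=\argmax_{w\in R}\id_{d,u_r}(w)$ with a reference amoebot chosen from $R$, MSB-first max-consensus on the global circuit, and a partial re-execution of the \pasc per output bit governed by two binary counters laid out along the outer-boundary chain, giving $O(\log n)$ bits times $O(\log n)$ rounds each. The only point you make explicit that the paper leaves implicit is electing $u_r\in R$ via \cref{th:leaderelection} (which is indeed where the w.h.p.\ comes from); the argument is correct and takes the same route as the paper.
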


\section{Skeletons}

This section deals with (canonical) skeletons.
In the first subsection, we canonicalize and construct the skeletons.
In the second and third subsection, we show two applications for skeletons by showing how to construct spanning trees and how to detect symmetries.

\subsection{Canonicalized Construction}
\label{sec:skeleton}

The general idea to construct a skeleton is to start with the cycles given by the boundary sets and to fuse these into a single cycle, i.e., a skeleton
(see Figure~\ref{fig:skeleton}).
% (see Figures~\ref{fig:skeleton} and ~\ref{fig:skeleton_NWN_p}).
In order to fuse two boundary cycles, we have to determine a path between them.
The boundary cycles are split at the respective endpoints of the path and connected along the path.
Note that the fused cycle uses the path twice.
The difficulty lies in finding paths between the boundary cycles and in avoiding the creation of new cycles.
The construction is correct, i.e., we obtain a single cycle, iff the boundary sets and paths form a tree
with the boundary sets as the nodes and the paths as the edges.
In order to obtain a skeleton path, the skeleton is split at an arbitrary point, e.g., chosen by a leader election.

% \begin{figure}[tb]
%     \centering
%     \begin{subfigure}[b]{0.32\textwidth}
%         \centering
%         \includegraphics[page=1]{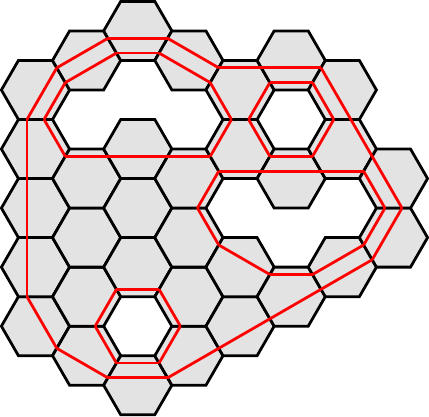} % with .6
%         \caption{}
%         \label{fig:skeleton_initial}
%     \end{subfigure}
%     \hfill
%     \begin{subfigure}[b]{0.32\textwidth}
%         \centering
%         \includegraphics[page=1]{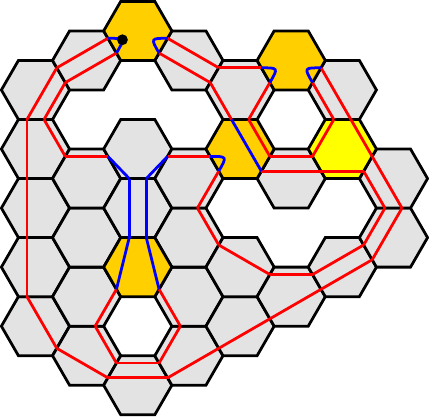}
%         \caption{}
%         \label{fig:skeleton_N_p}
%     \end{subfigure}
%     \hfill
%     \begin{subfigure}[b]{0.32\textwidth}
%         \centering
%         \includegraphics[page=1]{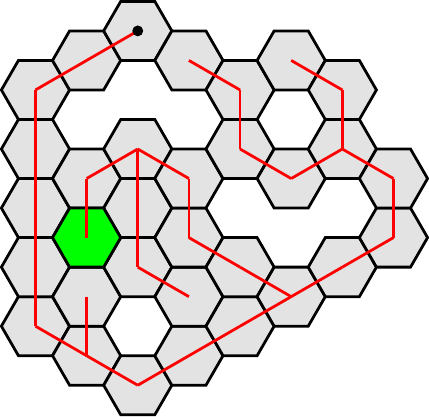}
%         \caption{}
%         \label{fig:skeleton_W_p}
%     \end{subfigure}
%     \caption{
%         (a) shows the initial situation.
%         The red lines indicate the boundary cycles.
%         (b) shows the $(N, +)$-skeleton.
%         The yellow and orange amoebots indicate the global maxima of the boundary sets.
%         The orange amoebots indicate the starting points of the paths.
%         The blue lines indicate the paths between the boundary cycles.
%         The node indicates the location where the cycle is split.
%         (c) shows the spanning tree obtained from the $(N, +)$-skeleton.
%         The node indicates the root.
%     }
%     \label{fig:skeleton}
% \end{figure}

\begin{figure}[tb]
    \centering
    \begin{subfigure}[b]{0.45\textwidth}
        \centering
        \includegraphics[page=1]{fig/fig_skeleton}
        \caption{}
        \label{fig:skeleton_initial}
    \end{subfigure}
    \hfill
    \begin{subfigure}[b]{0.45\textwidth}
        \centering
        \includegraphics[page=1]{fig/fig_skeleton_N_p}
        \caption{}
        \label{fig:skeleton_N_p}
    \end{subfigure}

    \begin{subfigure}[b]{0.45\textwidth}
        \centering
        \includegraphics[page=1]{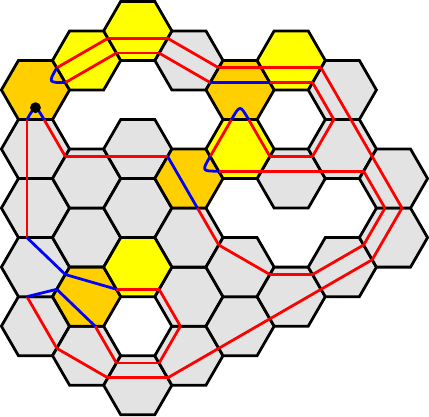}
        \caption{}
        \label{fig:skeleton_NWN_p}
    \end{subfigure}
    \hfill
    \begin{subfigure}[b]{0.45\textwidth}
        \centering
        \includegraphics[page=1]{fig/fig_spanning_tree_N_p}
        \caption{}
        \label{fig:spanning_tree_N_p}
    \end{subfigure}
    \caption{
        (a) shows the initial situation.
        The red lines indicate the boundary cycles.
        (b) and (c) show the $(N, +)$-skeleton and $(NNW, +)$-skeleton, respectively.
        The yellow and orange amoebots indicate the global maxima of the boundary sets.
        The orange amoebots indicate the starting points of the paths.
        The blue lines indicate the paths between the boundary cycles.
        The node indicates the location where the cycle is split.
        (d) shows the spanning tree obtained from the $(N, +)$-skeleton.
        The node indicates the root.
    }
    \label{fig:skeleton}
\end{figure}

We now canonicalize the construction of a skeleton and a skeleton path.
Recall that we define the canonical skeleton (path) with respect to a cardinal direction $d \in D_m \cup D_p$ and a sign $s \in \{ +, - \}$.
For the canonical skeleton, we have to define how the paths are constructed,
and how the boundary cycles and paths are exactly connected.
For the canonical skeleton path, we have to define the point to split the canonical skeleton.
Afterwards, we show how the amoebot structure computes the canonical skeleton in a distributed fashion.

We start with the construction of the canonical skeleton.
Let $\rho_s(d,x)$ denote the direction obtained if we rotate direction $d$ by $x$ degrees counterclockwise if the sign $s$ is positive, and clockwise if the sign $s$ is negative.
Let $d_p = d$ if $d \in D_m$ and $d_p = \rho_s(d, 30)$ if $d \in D_p$.
For each inner boundary set $B$, we construct a path as follows.
First, we compute the global maxima of $B$ with respect to direction $d$.
Let $B_d$ denote these global maxima.
Then, we compute the global maximum of $B_d$ with respect to direction $\rho_s(d,90)$.
Let $u_B$ denote the global maximum.
% This global maximum serves as the starting point of the path.

\begin{lemma}
\label{lem:skeleton:starting_point}
    Amoebot $u_B$ is adjacent to exactly one node in $R$, namely the one in direction $\rho_s(d_p, 180)$.
    Further, no amoebot in direction $d_p$ of $u_B$ is adjacent to a node in $R$.
\end{lemma}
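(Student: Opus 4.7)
The plan is to prove both claims via a single lexicographic extremality argument. Endow $V$ with the order $\prec$ whose primary key is the $d$-projection and whose secondary key is the $\rho_s(d,90)$-projection of a node; by construction $u_B$ is the lex-maximum of $B$ in this order. The key step is to show that every node $v$ of the empty region $R$ enclosed by $B$ satisfies $v\prec u_B$.

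Since $B$ is an inner boundary, $R$ is finite and hence admits a lex-maximum node $r^*$. Let $\hat{d_p}$ denote the unit vector in direction $d_p$ and set $w^*:=r^*+\hat{d_p}$. The inner product $\hat{d_p}\cdot\hat{d}$ equals $1$ when $d\in D_m$ and $\cos 30^\circ$ when $d\in D_p$, hence is strictly positive in every case; so $w^*$'s $d$-projection strictly exceeds $r^*$'s, yielding $w^*\succ v$ for every $v\in R$. The lex-maximality of $r^*$ rules out $w^*\in R$; the adjacency of $w^*$ to $r^*\in R$ rules out $w^*$ lying in any other empty component; so $w^*\in S$ and, being adjacent to $R$, $w^*\in B$. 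Combined with $u_B$ being lex-max of $B$, this gives $v\prec w^*\preceq u_B$ for every $v\in R$, proving the key step.

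Claim~2 follows at once: any amoebot $v$ on the $d_p$-axis strictly in direction $d_p$ from $u_B$ has $d$-projection exceeding $u_B$'s (by a positive multiple of $\hat{d_p}\cdot\hat{d}$), hence $v\succ u_B$; if $v$ were adjacent to $R$ it would lie in $B$, contradicting $u_B$'s lex-maximality. For Claim~1, since $u_B\in B$ has at least one $R$-neighbor, it suffices to exclude the other five $D_m$-directions. I classify the six $D_m$-neighbors of $u_B$ into three groups: (i) directions $\delta$ with $u_B+\hat{\delta}\succ u_B$, which are immediately excluded as $R$-neighbors by the key step; (ii) the single direction $\rho_s(d_p,180)$; and (iii) the remaining ``back'' directions.

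A direct per-case check (splitting on $d\in D_m$ versus $D_p$ and on $s\in\{+,-\}$) establishes the following adjacency property in group~(iii): for every such $\delta$, the vector $\hat{\delta}+\hat{d_p}$ equals $\hat{\delta'}$ for some $\delta'\in D_m$ lying in group~(i). Consequently, if $w:=u_B+\hat{\delta}$ for some group-(iii) $\delta$ were in $R$, then $w+\hat{d_p}=u_B+\hat{\delta'}$---a group-(i) neighbor of $u_B$---would be either empty, forcing it into $R$ by adjacency to $w\in R$ and contradicting its exclusion from $R$, or in $S$, hence in $B$ by adjacency to $R$ but strictly lex-exceeding $u_B$, contradicting lex-maximality of $u_B$ in $B$. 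Thus group~(iii) is also excluded, so the unique $R$-neighbor of $u_B$ lies in direction $\rho_s(d_p,180)$. The main technical point is this group~(iii) verification: in each of the four sub-cases, the angle between $\delta$ and $d_p$ is exactly $120^\circ$, so their unit-vector sum is itself a $D_m$ unit vector, and the resulting direction falls in group~(i) by routine angle-bookkeeping on the triangular grid.
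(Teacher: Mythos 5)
Your proof is correct and, at its core, rests on the same fact the paper's proof invokes: $u_B$ is the unique lexicographic maximum of $B$ under the order with primary key the $d$-projection and secondary key the $\rho_s(d,90)$-projection. The paper's own proof, however, is essentially a two-sentence ``it is easy to see'' appeal to this extremality, and it silently skips the two points you work out explicitly. First, extremality of $u_B$ \emph{within $B$} does not by itself exclude an unoccupied node of $R$ lying above $u_B$; one must convert ``a node of $R$ is at least as high as $u_B$'' into ``an amoebot of $B$ is higher than $u_B$''. Your key step does exactly this by taking the lex-maximum $r^*$ of the finite region $R$ and observing that $r^*+\hat{d_p}$ must be occupied and hence in $B$, which yields the clean and reusable bound that every node of $R$ lies strictly below $u_B$ in the $d$-projection. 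Second, the two ``back'' neighbours of $u_B$ (your group (iii)) are not excluded by extremality alone; your observation that adding $\hat{d_p}$ to such a neighbour lands on a group-(i) neighbour of $u_B$ (the $120^\circ$ angle bookkeeping), followed by the occupied/unoccupied case split, is a genuinely needed extra step that the paper does not spell out. The second claim of the lemma you obtain exactly as the paper does. In short: same strategy, but your write-up is a completion of the paper's sketch rather than a paraphrase of it; the only part you leave semi-implicit is the four-fold case verification for group (iii), which is routine and checks out.
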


\begin{proof}%[Proof of \cref{lem:skeleton:starting_point}]
    Amoebot $u_B$ has to be adjacent to at least one node in $R$.
    Otherwise, $u_B \not\in B$ would hold.
    It is easy to see that
    if that node would lie in another direction than $\rho_s(d_p, 180)$,
    either $u_B$ would not be a global maximum of $B$ with respect to $d$,
    or $u_B$ would not be the global maximum of $B_d$ with respect to direction $\rho_s(d,90)$.
    By the same reasoning, $u_B$ would not be a global maximum of $B$ with respect to $d$
    if any amoebot in direction $d_p$ of $u_B$ would be adjacent to a node in $R$.
\end{proof}

The path starts at $u_B$ and goes straight in direction $d_p$ until it reaches an amoebot $v_B$ of another boundary set.
The existence of $v_B$ is guaranteed by the outer boundary set.
Clearly, all nodes of the path are occupied by amoebots.
Note that the path may be trivial, i.e., $u_B = v_B$.
There is only a single case where $v_B$ is part of two boundary sets unequal to $B$.
In this case, we take the boundary of $v_B$ in direction $\rho_s(d_p,60)$.

\begin{lemma}
\label{lem:skeleton:tree}
    The boundary sets and paths form a tree.
\end{lemma}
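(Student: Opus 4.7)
The plan is to exhibit the construction as a graph $G$ on $m$ nodes (one per boundary set) with exactly $m-1$ edges (one per inner boundary), and then to show $G$ is connected; together these imply that $G$ is a tree.

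\emph{Edge count.} The outer boundary contributes no path and each of the $m-1$ inner boundaries contributes one, so it suffices to check that distinct inner boundaries yield distinct edges. Suppose two inner boundaries $B_1 \ne B_2$ satisfied $u_{B_1} = u_{B_2}$. Then by \cref{lem:skeleton:starting_point} the unique neighbor of this amoebot in direction $\rho_s(d_p,180)$ would have to be an empty node belonging to both the empty region of $B_1$ and the empty region of $B_2$; these regions are distinct, a contradiction. Since no path is a self-loop ($v_B$ lies on a boundary different from $B$) and different inner boundaries yield paths with different starting amoebots, $G$ has exactly $m-1$ edges.

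\emph{Connectedness via a potential.} For each boundary $B$, let $\phi(B)$ be the lexicographic pair consisting of the coordinate of $u_B$ in direction $d$ followed by its coordinate in direction $\rho_s(d,90)$. I claim that whenever $B$ is inner and $B'$ is the boundary containing $v_B$, one has $\phi(B') > \phi(B)$. Because $d_p$ differs from $d$ by at most $30^\circ$, travelling in direction $d_p$ strictly increases the $d$-coordinate. In the nontrivial case $u_B \ne v_B$, the $d$-coordinate of $v_B$, and hence that of $u_{B'}$ (which is the $d$-maximum of $B'$), strictly exceeds that of $u_B$. In the trivial case $u_B = v_B \in B'$, either $u_{B'}$ has strictly larger $d$-coordinate than $u_B$ and we are done, or $u_B$ itself lies in the $d$-maximal subset $B'_d$ of $B'$ and $u_{B'}$ is the $\rho_s(d,90)$-maximum of $B'_d$. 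The same empty-region argument as above shows $u_B \ne u_{B'}$, so the tie-breaker forces $u_{B'}$ to have a strictly larger $\rho_s(d,90)$-coordinate than $u_B$.

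\emph{Conclusion.} Starting from any inner boundary and iteratively following outgoing edges yields a sequence of boundaries with strictly increasing $\phi$; since there are finitely many boundaries this sequence must terminate, and it can only terminate at the outer boundary, which is the sole node of $G$ lacking an outgoing edge. Hence every inner boundary is connected to the outer boundary in $G$, so $G$ is connected, and a connected graph with $m$ nodes and $m-1$ edges is a tree. The main subtlety is the trivial-path case, where one must lean on the secondary coordinate of $\phi$ and verify that $u_B \ne u_{B'}$ via the empty-region argument.
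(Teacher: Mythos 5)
Your overall strategy --- one outgoing edge per inner boundary set, a potential that is strictly monotone along these edges, hence no cycles, hence a tree --- is essentially the paper's own argument, which uses the scalar rank $\operatorname{rank}(B)=\min_{w\in B}\operatorname f_d(S,w)$ (with $\operatorname{rank}(B_O)=-1$) in place of your lexicographic pair $\phi$. The substantive difference is how the trivial-path tie is broken: the paper walks from $u_B$ through the adjacent empty region $R'$ in direction $d_p$ to an amoebot $x\in B'$ with strictly smaller $\operatorname f_d$, so its scalar rank still strictly decreases, whereas you fall back on the secondary coordinate of $\phi$.

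There is, however, a concrete case in which your intermediate claim $\phi(B')>\phi(B)$ fails: a trivial path that lands on the outer boundary. Take the minimal hexagonal ring (six amoebots around one empty node) with $d=N$: the inner boundary $B$ and the outer boundary $B_O$ both consist of all six amoebots, $u_B$ is the northernmost amoebot, the path is trivial with $B'=B_O$, and $u_{B_O}=u_B$, so $\phi(B_O)=\phi(B)$. Your ``empty-region argument'' cannot rescue this, because \cref{lem:skeleton:starting_point} holds only for inner boundary sets --- for the outer boundary the adjacent empty node lies in direction $d_p$ rather than $\rho_s(d_p,180)$, so $u_B=u_{B_O}$ yields no contradiction. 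This is precisely the corner the paper sidesteps by decreeing $\operatorname{rank}(B_O)=-1$. The gap is harmless for your final conclusion (a walk that reaches $B_O$ stops there anyway, so strict increase on edges between \emph{inner} boundaries already gives connectivity), but the claim should be restricted to inner $B'$. A second, smaller imprecision: $u_{B_1}\neq u_{B_2}$ does not by itself make the undirected edges distinct, since the pair $\{B_1,B_2\}$ could a priori arise from both paths; this two-cycle is excluded by your potential, or avoided entirely by counting the $m-1$ paths as multigraph edges, because a connected multigraph on $m$ nodes with $m-1$ edges is automatically a simple tree.
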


\begin{proof}%[Proof of \cref{lem:skeleton:tree}]
    Let $\operatorname{rank}(B) = \min_{w \in B} \operatorname f_{d}(S,w)$ be the rank of an inner boundary set $B$, and let $\operatorname{rank}(B_O) = - 1$ be the rank of the outer boundary set $B_O$ (compare to the definition of the global maxima in Section~\ref{sec:problem}).
    The rank of an inner boundary set is lower than the rank of another inner boundary set if its global maxima are further in direction $d$ than the global maxima of the other inner boundary set.
    Furthermore, the outer boundary set has a lower rank than all ranks of the inner boundary sets.

    We claim that for each inner boundary set $B$, we construct a path from $B$ to another boundary set $B'$ such that $\operatorname{rank}(B) > \operatorname{rank}(B')$ holds.
    Clearly, this relationship cannot be cyclic.
    The lemma immediately follows
    since we construct a path for each inner boundary set.
    We prove the claim in the following.

    Lemma~\ref{lem:skeleton:starting_point} excludes the possibility of a self-loop, i.e., $B \neq B'$ holds.
    The claim holds by definition if $B'$ is the outer boundary set.
    Suppose that $B'$ is an inner boundary set.
%     Let $u_B$ be the starting point and $v_B$ the endpoint of the path.
    The claim also holds if the path from $u_B$ to $v_B$ is not trivial
    since $\operatorname{rank}(B) = \operatorname f_d(S,u_B) > \operatorname f_d(S,v_B) > \operatorname{rank}(B')$ holds.

    Suppose that the path is trivial, i.e., $u_B \in B$ and $u_B \in B'$.
%     Let $R'$ denote the corresponding empty region of $B'$.
    Let $w \in R'$ be a node adjacent to $u_B$.
    Note that $\operatorname f_d(S,u_B) \geq \operatorname f_d(S,w)$ holds
    since otherwise, $R = R'$ and with that $B = B'$ would hold.
    We go from $w$ into direction $d_p$ until we reach an amoebot $x \in B'$.
    Note that $\operatorname f_d(S,w) > \operatorname f_d(S,x)$ holds
    since for $V \setminus R_O$, $\operatorname f_d$ is strictly monotonically decreasing if we go into direction $d_p$.
    This amoebot exists since $B'$ is an inner boundary set.
    The claim holds
    since $\operatorname{rank}(B) = \operatorname f_d(S,u_B) \geq \operatorname f_d(S,w) > \operatorname f_d(S,x) > \operatorname{rank}(B')$ holds.
\end{proof}

It remains to define how the cycles and paths are exactly connected.
We define that the cycle runs along the tree without crossing itself.

Next, consider the construction of the canonical skeleton path.
We determine the splitting point $u_{B_O}$
by applying the same procedure as for the starting points of the paths
on the outer boundary set $B_O$.
That is, we first compute the global maxima of $B_O$ with respect to direction $d$,
and then compute the global maximum of these global maxima with respect to direction $\rho_s(d,90)$.
If the canonical skeleton visits $u_{B_O}$ multiple times, we pick a predefined position with respect to $d_p$ (see Figure~\ref{fig:splitting_point}).
% The crucial property of the canonical skeleton (path) is given by the following lemma.

\begin{figure}[hbt]
    \centering
    \includegraphics[page=1]{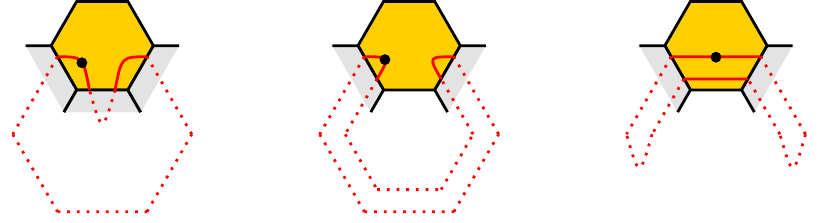}
    \caption{
        Predefined splitting point with respect to $d_p$.
        For $d_p = N$, the figure shows all cases where the canonical skeleton visits $u_{B_O}$ multiple times.
        By similar arguments as for Lemma~\ref{lem:skeleton:starting_point}, there are no other cases.
        The red lines indicate the canonical skeleton.
        The node indicates the splitting point.
    }
    \label{fig:splitting_point}
\end{figure}

\begin{lemma}
\label{lem:skeleton:complexity}
    The canonical skeleton (path) visits each bond at most twice.
    Thus, the canonical skeleton has linear complexity.
\end{lemma}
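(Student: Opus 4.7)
The plan is to decompose the canonical skeleton into its basic building blocks and bound the contribution of every bond from each block. By construction, the skeleton is obtained by taking the collection of boundary cycles $\{C_B\}_B$ and iteratively fusing each inner boundary cycle with the cycle that its path $P_B$ leads to. When fusing two cycles along a path, the resulting cycle traverses each original cycle exactly once and the path exactly twice (once in each direction), since one must leave the outer cycle through the path, traverse the inner cycle, and return. Consequently, for every bond $e$,
\[
    \#\mathrm{visits}(e) \;=\; |\{B : e \in C_B\}| \;+\; 2\cdot|\{B : e \in P_B\}|,
\]
and it suffices to show (i) $e$ lies on at most two boundary cycles, (ii) $e$ lies on at most one path, and (iii) if $e$ lies on any path, then $e$ lies on no boundary cycle.

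Claim (i) is a planarity observation: in the triangular grid the bond $e$ borders exactly two grid-triangles, and a boundary cycle uses $e$ only when the corresponding grid-triangle contains a vertex outside $S$, so $e$ is used by at most two boundary cycles. For (ii), if two distinct paths $P_{B_1}, P_{B_2}$ shared a bond, their common direction $d_p$ would force their supporting axes to coincide; assuming without loss of generality that $u_{B_1}$ comes no later than $u_{B_2}$ along this axis, the halting rule of the construction (stop at the first amoebot belonging to another boundary set) forces $v_{B_1}$ to lie at $u_{B_2}$ or earlier, so the two paths can share at most the single node $v_{B_1}=u_{B_2}$ and no bond.

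Claim (iii) is the heart of the argument. If $P_B$ has length at least two, every intermediate amoebot $w_i$ is an inner amoebot, since otherwise $P_B$ would have terminated at $w_i$; hence every bond of $P_B$ has at least one inner endpoint, which rules out membership in any boundary cycle (whose bonds have both endpoints in the same boundary set). For the remaining case of a length-one path $e=\{u_B,v_B\}$, the idea is to first observe that $u_B$ lies in no boundary set other than $B$, because otherwise the construction would already produce the trivial path $u_B=v_B$. All six neighbors of $u_B$ are thus either in $S$ or in $R$, and by \cref{lem:skeleton:starting_point} only the neighbor in direction $\rho_s(d_p,180)$ lies in $R$. In particular, the two third vertices of the grid-triangles adjacent to $e$, which sit at directions $\rho_s(d_p,60)$ and $\rho_s(d_p,-60)$ from $u_B$, both lie in $S$, so $e$ is not incident to any empty region and lies on no boundary cycle.

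Combining (i)--(iii) yields $\#\mathrm{visits}(e)\le 2$, as required. Linear complexity of the canonical skeleton follows immediately: by the local argument from the proof of \cref{cor:iobt} each amoebot participates in at most three local boundaries, so $\sum_B |C_B|=O(n)$, while (ii) and bond-disjointness of paths give $\sum_B |P_B|=O(n)$; the total length of the canonical skeleton is therefore $O(n)$. The main obstacle I anticipate is the length-one case of (iii), which is precisely where the extremality of $u_B$ captured by \cref{lem:skeleton:starting_point} is indispensable; the remaining clauses are essentially planarity and construction bookkeeping.
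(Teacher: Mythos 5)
Your proof is correct and follows essentially the same route as the paper's: bound boundary-cycle visits by the two local boundaries (grid-triangles with an unoccupied third vertex) per bond, bound path visits via the at-most-one-path claim resting on Lemma~\ref{lem:skeleton:starting_point}, and show that path visits and boundary-cycle visits cannot coexist on a bond. Your claim~(iii), especially the length-one path case via the extremality of $u_B$, just spells out what the paper compresses into one sentence; the only quibble is that your displayed formula counts distinct boundary cycles rather than traversals (a non-simple boundary cycle can traverse a bond twice), but your triangle-counting argument in~(i) bounds the number of traversals correctly anyway.
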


\begin{proof}%[Proof of \cref{lem:skeleton:complexity}]
    The canonical skeleton (path) visits a bond either due to a boundary cycle or due to one of the paths.
    Each local boundary (a common unoccupied adjacent node of the endpoints) adds one visit.
    Each bond has at most two local boundaries.
    Each path adds two visits.
    Due to Lemma~\ref{lem:skeleton:starting_point}, a bond cannot be part of more than one path.
    A bond cannot be part of a boundary cycle and a path at the same time
    since the path would stop at either endpoints due to the unoccupied adjacent node.
    Altogether, each bond is visited at most twice.
\end{proof}

In the following, we present our canonical skeleton algorithm
that computes the canonical skeleton and the splitting point for the canonical skeleton path in parallel.
Some instructions are performed on different subsets in parallel.
In order to keep the amoebot structure synchronized, we apply the synchronization primitive (see Section~\ref{sec:synchronization}).
In a preprocessing step, each boundary set determines whether it is an inner or outer boundary set (see Corollary~\ref{cor:iobt}).

The canonical skeleton algorithm follows our construction of the canonical skeleton.
In the first step,
we compute the starting points of the paths and the splitting point
by performing the global maxima algorithm on each boundary set $B$ with respect to direction $d$,
and on each resulting set $B_d$ with respect to direction $\rho_s(d,90)$.
However, the computation of the boundary sets may interfere with each other since the boundary sets may intersect.
In order to circumvent that problem, we add two additional external links and let each boundary set use the two external links closer to the corresponding empty region (see Figures~\ref{fig:boundary_maxima1} to~\ref{fig:boundary_maxima3}).
%\ap{utilize/use four pins}
Note that an amoebot is not able to determine whether two adjacent nodes belong to the same empty region (see Figure~\ref{fig:boundary_maxima3}).
Hence, it can only construct the primary and secondary circuits along the cycle.
Subsequently, it handles each of its occurrences within the cycle separately.
Nonetheless, the adjusted construction still satisfies the necessary properties given in Section~\ref{sec:idspatial}.

\begin{figure}[tb]
    \centering
    \begin{subfigure}[b]{0.25\textwidth}
        \centering
        \includegraphics[page=1]{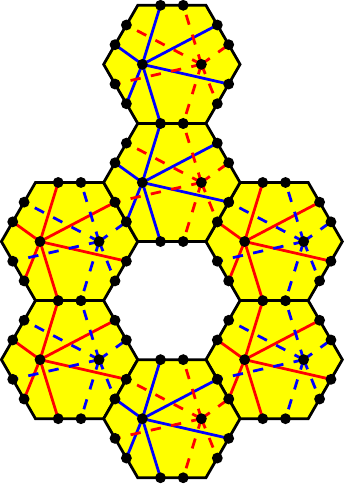}
        \caption{}
        \label{fig:boundary_maxima1}
    \end{subfigure}
    \hfill
    \begin{subfigure}[b]{0.25\textwidth}
        \centering
        \includegraphics[page=1]{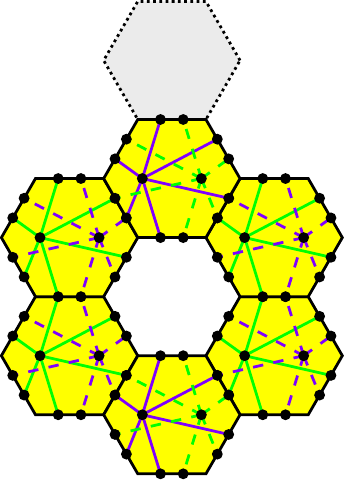}
        \caption{}
        \label{fig:boundary_maxima2}
    \end{subfigure}
    \hfill
    \begin{subfigure}[b]{0.25\textwidth}
        \centering
        \includegraphics[page=1]{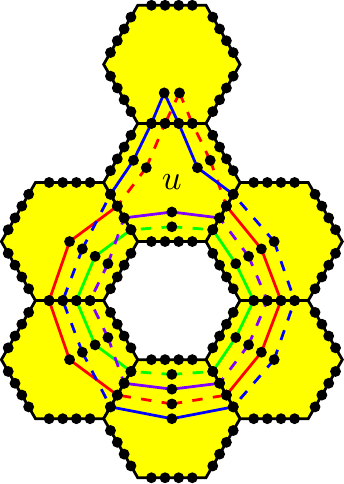}
        \caption{}
        \label{fig:boundary_maxima3}
    \end{subfigure}
    \hfill
    \begin{subfigure}[b]{0.20\textwidth}
        \centering
        \includegraphics[page=1]{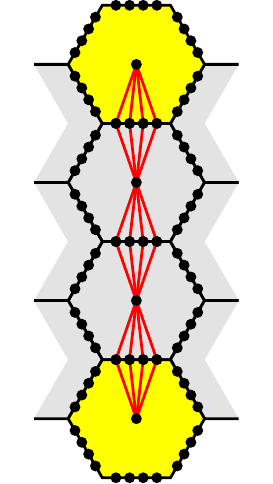}
        \caption{}
        \label{fig:boundary_maxima4}
    \end{subfigure}
    \caption{
        Computation of the global maxima of each boundary set with respect to $d = N$.
        (a) and (b) show the original construction for the outer and inner boundary set, respectively.
        The gray amoebot does not participate in the computation for the inner boundary set.
        (c) shows the construction for the computation in parallel.
        Amoebot $u$ is unaware that two of its adjacent nodes belong to the same empty region.
        (d) shows the circuit utilized to identify the paths.
        The yellow amoebots are boundary amoebots.
        The gray amoebots are inner amoebots.
    }
    \label{fig:boundary_maxima}
\end{figure}

The second step is the computation of the paths from the starting points straight into direction $d_p$.
The canonical skeleton algorithm proceeds as follows.
Each inner amoebot connects all pins of its neighbors in directions $d_p$ and $\rho_s(d_p,180)$,
and each boundary amoebot connects all pins to its neighbors in direction $d_p$ and $\rho_s(d_p,180)$, respectively (see Figure~\ref{fig:boundary_maxima4}).
Each starting point without a second boundary activates the circuit to its neighbor in direction $d_p$.
Each amoebot that receives a beep is part of a path from the starting point straight into direction $d_p$.
Finally, we obtain the following theorem:

\begin{theorem}
\label{th:skeleton}
    The canonical skeleton algorithm computes a (canonical) skeleton (path) in $O(\log^2 n)$ rounds w.h.p.
\end{theorem}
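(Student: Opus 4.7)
The plan is to verify both correctness and runtime of the canonical skeleton algorithm by matching each algorithmic step against the construction described in Section~\ref{sec:skeleton}, and then summing up the round complexities of the invoked primitives.

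For correctness, I would argue phase by phase. The preprocessing classifies every boundary set as inner or outer using Corollary~\ref{cor:iobt}. In Phase~1, the algorithm invokes the global maxima algorithm on every boundary set simultaneously: first with respect to $d$ to obtain $B_d$ for each boundary set $B$, then with respect to $\rho_s(d,90)$ restricted to $B_d$ to obtain the unique amoebot $u_B$ (and the splitting point $u_{B_O}$ analogously). Here one must check that running the \pasc on each boundary cycle in parallel is sound; this follows because the auxiliary external links depicted in Figures~\ref{fig:boundary_maxima1}--\ref{fig:boundary_maxima3} separate the primary/secondary circuits of distinct empty regions, and because each amoebot treats its (constantly many) occurrences on a boundary cycle independently, which by Lemma~\ref{lem:id:construction} still satisfies the conditions of Section~\ref{sec:idspatial}. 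In Phase~2, the circuit of Figure~\ref{fig:boundary_maxima4} is established on inner amoebots and on the $d_p$/$\rho_s(d_p,180)$ sides of boundary amoebots, and each $u_B$ that is not already incident to a second boundary beeps towards its $d_p$-neighbor; by Lemma~\ref{lem:skeleton:starting_point} this beep travels on a straight line of amoebots until it hits the next boundary set, which marks exactly the path from $u_B$ to $v_B$. Every amoebot on the path thus learns its two predecessor/successor relations on the doubled-path portion of the canonical skeleton, while every boundary amoebot fuses its cycle-edges with the path-edges according to Lemma~\ref{lem:skeleton:tree}, yielding a single cycle as claimed. Finally, $u_{B_O}$ splits this cycle into the canonical skeleton path; multiplicity of $u_{B_O}$ is resolved by the fixed rule of Figure~\ref{fig:splitting_point}, which can be checked locally from the directions of the incident occurrences.

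For the runtime, I would add up the costs: Corollary~\ref{cor:iobt} runs in $O(\log n)$ w.h.p.; each of the two global-maxima computations takes $O(\log^2 n)$ w.h.p.\ by Theorem~\ref{th:maxima}, performed simultaneously over all boundary sets; the path-growth of Phase~2 requires only $O(1)$ rounds since it is a single parallel beep on a pre-configured circuit; and the selection of the splitting point among the at most constantly many occurrences of $u_{B_O}$ takes $O(1)$ rounds. The synchronization primitive of Section~\ref{sec:synchronization} adds only a constant multiplicative overhead per phase because the number of phases is constant. Summing gives $O(\log^2 n)$ rounds w.h.p.

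The main obstacle, I expect, is the parallel execution of the \pasc on all boundary sets at once. The boundary cycles are not disjoint as vertex sets (a single amoebot can lie on up to three local boundaries), so a naive application of Theorem~\ref{th:maxima} would mix circuits from different boundary sets. To handle this rigorously I would explain that each amoebot treats each of its (at most three) local boundary occurrences as a logically separate participant, with its own primary/secondary partition sets connected only along the cycle of the corresponding empty region via the extra external links. This reduces the situation to independent chain-based invocations of Theorem~\ref{th:maxima}, and the disjointness of the circuits $\mathcal{C}_i$ in the hypothesis of Theorem~\ref{th:leaderelection} is preserved. A secondary subtlety is ensuring that during Phase~2 the path circuit correctly terminates at the first boundary amoebot in direction $d_p$ and does not leak into the next empty region; this follows because boundary amoebots only connect their $d_p$ and $\rho_s(d_p,180)$ pins separately, so the beep stops exactly at $v_B$, matching the tie-breaking rule for the case that $v_B$ lies on two non-$B$ boundaries.
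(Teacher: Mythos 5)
Your proposal is correct and follows essentially the same route as the paper: the paper's own proof is just the runtime accounting you give in your final paragraph (preprocessing $O(\log n)$ w.h.p., global maxima $O(\log^2 n)$ w.h.p., path construction $O(1)$), with correctness delegated to Lemmas~\ref{lem:skeleton:starting_point} and~\ref{lem:skeleton:tree} and the algorithm description exactly as you reconstruct them. Your additional discussion of the parallel \pasc{} on overlapping boundary cycles makes explicit what the paper only sketches via Figures~\ref{fig:boundary_maxima1}--\ref{fig:boundary_maxima3}, but it is the same argument.
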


\begin{proof}%[Proof of \cref{th:skeleton}]
%     The correctness follows from the correctness of the construction from a global perspective.
    The preprocessing step requires $O(\log n)$ rounds w.h.p. (see Section~\ref{sec:boundary}).
    The first step requires $O(\log^2 n)$ rounds for the computation of global maxima (see Section~\ref{sec:idapp}).
    The second requires $O(1)$ rounds.
    Altogether, the canonical skeleton algorithm requires $O(\log^2 n)$ rounds w.h.p.
\end{proof}

\subsection{Spanning Tree}
\label{sec:spanningtree}

We now show how a skeleton can be utilized to construct a spanning tree.
We assume that we have already computed a (not necessarily canonical) skeleton (see \cref{sec:skeleton}). 
Our spanning tree algorithm consists of two phases.
We first outline the goal of each phase.
% In the first phase, we compute a skeleton path (see Section~\ref{sec:skeleton}).
In the first phase, we construct a tree spanning all amoebots of the skeleton path.
In the second phase, we add the remaining amoebots to the tree.
Now, consider the first phase.
We make use of the following lemma.

\begin{lemma}
\label{lem:tree}
    Let $G = (V,E)$ be a connected graph.
    Let $\pi = (v_1, \dots, v_m)$ be a path in $G$.
    Let $V' \subseteq V$ be the set of all amoebots on the path $\pi$.
    Let $\pi(v)$ denote the first edge in $\pi$ incident to $v$.
    Then, $T = (V', E')$ with $E' = \bigcup_{v \in V' \setminus \{v_1\}} \{\pi(v)\}$ is a tree.
\end{lemma}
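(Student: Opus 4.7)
The plan is to show that $T$ has exactly $|V'|-1$ edges and is connected; a standard counting argument then gives that $T$ is a tree. Throughout, I would identify $\pi$ with its sequence of edges $e_i = \{v_i, v_{i+1}\}$ for $1 \le i \le m-1$ and, for each $v \in V'$, define its \emph{first appearance index} $j(v) = \min\{i \mid v_i = v\}$. The key observation is that, for every $v \in V' \setminus \{v_1\}$, the edge $\pi(v)$ is exactly $e_{j(v)-1} = \{v_{j(v)-1}, v\}$: since $v$ does not appear before position $j(v)$, no edge $e_i$ with $i \le j(v) - 2$ contains $v$, while $e_{j(v)-1}$ does and is therefore the first edge of $\pi$ incident to $v$.

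Next, I would show that the assignment $v \mapsto \pi(v)$, restricted to $V' \setminus \{v_1\}$, is injective, which immediately gives $|E'| = |V'| - 1$. Suppose $u \neq v$ in $V' \setminus \{v_1\}$ satisfy $\pi(u) = \pi(v)$ and, without loss of generality, that $j(u) < j(v)$. By the observation above, this edge equals $\{v_{j(u)-1}, u\} = \{v_{j(v)-1}, v\}$; since $u \neq v$, we must have $v = v_{j(u)-1}$, which places $v$ at position $j(u)-1 < j(v)$ in the walk, contradicting the minimality of $j(v)$.

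For connectivity, I would prove by induction on $j(v)$ that every $v \in V'$ lies in the connected component of $v_1$ in $T$. The base case $j(v) = 1$ forces $v = v_1$. For the inductive step with $v \in V' \setminus \{v_1\}$, the edge $\pi(v) \in E'$ connects $v$ to $v_{j(v)-1}$, a vertex whose first appearance index is at most $j(v) - 1$, so by induction $v_{j(v)-1}$ is connected to $v_1$. Combining $|E'| = |V'|-1$ with connectivity yields that $T$ is a tree. The argument is essentially careful bookkeeping; the only mild subtlety is pinning down the ``first incident edge'' as the edge entering $v$ at its first appearance, and once that is done both injectivity and connectivity follow from the same indexing. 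I expect no genuine obstacle, as the lemma is a structural fact about walks rather than about the circuit model.
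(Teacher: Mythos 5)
Your proof is correct. The connectivity half is essentially the paper's own argument: an induction along the walk showing that each first occurrence of a vertex $v_{i+1}$ is attached by $\pi(v_{i+1})=\{v_i,v_{i+1}\}$ to the already-connected prefix. Where you diverge is in ruling out cycles: the paper orients each edge $\pi(v)=(v',v)$ from the earlier-appearing endpoint $v'$ to $v$ and asserts that this relation ``cannot be cyclic,'' whereas you instead prove that $v\mapsto\pi(v)$ is injective on $V'\setminus\{v_1\}$, deduce $|E'|=|V'|-1$, and invoke the standard fact that a connected graph on $|V'|$ vertices with $|V'|-1$ edges is a tree. Both routes work; your counting argument has the advantage of making fully explicit a point the paper leaves implicit (that distinct vertices cannot contribute the same edge to $E'$, which is exactly what the terse ``clearly not cyclic'' step secretly relies on), at the cost of needing the auxiliary injectivity claim and the edge-count characterization of trees. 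Your identification of $\pi(v)$ with $e_{j(v)-1}$, the edge entering $v$ at its first appearance, is the right way to make the ``first incident edge'' precise, and it is also what makes the paper's ordering argument rigorous; note only that $\pi$ must be read as a (possibly non-simple) walk, which both you and the paper implicitly do.
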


\begin{proof}%[Proof of \cref{lem:tree}]
    In order to prove that $T$ is a tree, we show that $T$ is cycle-free and connected.
%     Suppose that $T$ has a cycle $u_1, \dots, u_k$.
%     Let $u_i$ denote the node in the cycle that occurs last in $\pi$.
%     Consider $\{u_{i-1}, u_i\}, \{u_i, u_{i+1}\} \in E'$.
%     Node $u_{i-1}$ could not have contributed the edge $\{u_{i-1}, u_i\}$ to $E'$ since either $u_{i-1} = v_1$ or $\pi(u_{i-1}) \neq \{u_{i-1}, u_i\}$.
%     The same holds for node $u_{i+1}$ and edge $\{u_i, u_{i+1}\}$.
%     However, $u_i$ could only contributed one of both edges to $E'$.
%     This contradicts the existence of a cycle.
    Each edge $\pi(v) = (v', v)$ implies that $v'$ appears before $v$ in $\pi$.
    Clearly, this relationship cannot be cyclic.
    
    We prove that $T$ is connected by induction on the path $\pi$.
    The induction base holds trivially for $v_1$.
    Suppose that all nodes up to node $v_i$ are connected within $T$.
    Consider node $v_{i+1}$.
    If it is not the first occurrence of $v_{i+1}$ on the path, then $v_{i+1}$ is already connected by induction hypothesis.
    Otherwise $\pi(v_{i+1}) = \{v_i, v_{i+1}\} \in E'$.
    This edge connects $v_{i+1}$ to all nodes up to node $v_i$ since these are connected by induction hypothesis.
\end{proof}

% The naive approach for the traversal would be to send a token along the path.
% However, this requires $O(n)$ rounds.
In order to determine the first occurrence of each amoebot,
we apply the \pasc algorithm on the path with $v_0$ as the reference amoebot (see \cref{sec:idchain}).
Each amoebot is able to determine its first occurrence
by simply comparing the identifiers of all its occurrences.
Each amoebot notifies the predecessor of its first occurrence.

Next, consider the second phase.
For each amoebot $v$ not included in the skeleton $S \setminus V'$, we add an edge from it to its northern neighbor $w$ to the spanning tree.
Note that $v$ is an inner amoebot
such that $w$ has to exist.
Otherwise, $v$ would be included in the skeleton.
Each amoebot $v \in S \setminus V'$ notifies its northern neighbor.
We obtain the following theorem:

\begin{theorem}
\label{th:spanning_tree}
    Given a skeleton,
    the spanning tree algorithm computes a spanning tree after $O(\log n)$ rounds.
    Altogether, it requires $O(\log^2 n)$ rounds w.h.p.
\end{theorem}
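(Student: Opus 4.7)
The plan is to establish correctness of the two-phase construction and then bound its running time. The two-phase structure of the algorithm separates naturally into: (Phase~1) build a tree $T_1$ on the skeleton vertices $V'$ via Lemma~\ref{lem:tree}, and (Phase~2) attach each non-skeleton amoebot via a single edge to its northern neighbor.

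For Phase~1 correctness, I would apply Lemma~\ref{lem:tree} directly to the skeleton path $\pi$: after running the \pasc with $v_0$ as reference amoebot, every amoebot has an identifier for each of its occurrences in $\pi$, and by Lemma~\ref{lem:skeleton:complexity} there are only constantly many such occurrences per amoebot, so the minimum identifier (corresponding to the first occurrence) can be compared locally. The amoebot then notifies the predecessor at this first occurrence, installing precisely the edge $\pi(v)$ of the lemma. This yields the claimed tree $T_1 = (V', E')$ on~$V'$.

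For Phase~2, I must show that the global graph obtained by adding, for each $v \in S \setminus V'$, the edge $\{v, w_v\}$ to its northern neighbor $w_v$, is a spanning tree of $S$. Existence of $w_v \in S$ follows because every non-skeleton amoebot is inner (all boundary amoebots lie in the skeleton by definition). To argue connectivity and acyclicity in one stroke, I would fix any root in $V'$ and define a parent pointer $p$ by $p(v) = w_v$ if $v \notin V'$ and $p(v)$ equal to $v$'s parent in $T_1$ otherwise. The key structural observation is that once the $p$-orbit enters $V'$ it stays in $V'$, so any $p$-cycle lies entirely inside $V'$ (excluded by $T_1$ being a tree) or entirely outside $V'$ (excluded because $p$ strictly moves north outside $V'$, and the amoebot structure is bounded). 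The same boundedness implies every $p$-chain reaches $V'$ and then the root, giving connectivity.

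For the runtime, Phase~1 executes the \pasc on~$\pi$ in $O(\log n)$ rounds by Lemma~\ref{lem:pasc}, followed by a local minimum computation and predecessor notifications, which can be implemented in $O(\log n)$ rounds by streaming the identifier bits as they are produced and keeping only the constantly many partial comparisons required per amoebot. Phase~2 is a single round of local communication with the northern neighbor. Together these give $O(\log n)$ rounds once a skeleton is available; adding the $O(\log^2 n)$ w.h.p.~skeleton construction of Theorem~\ref{th:skeleton} yields the claimed $O(\log^2 n)$ w.h.p.~overall bound. The main subtlety I expect is the acyclicity argument for Phase~2, since a single skeleton amoebot may serve as the northern neighbor of several non-skeleton amoebots and simultaneously carry Phase~1 edges; the parent-pointer framing above is precisely what makes this separation of cases clean.
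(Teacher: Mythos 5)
Your proposal is correct and follows essentially the same two-phase approach as the paper: Lemma~\ref{lem:tree} for the skeleton-path tree, northern-neighbor edges for the rest, and the same round counts. Your parent-pointer argument for Phase~2 acyclicity and connectivity is a welcome elaboration of a step the paper's proof leaves implicit (it only cites Lemma~\ref{lem:tree} for correctness), and it is sound since every non-skeleton amoebot is inner and finiteness forces each northward chain into $V'$.
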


\begin{proof}%[Proof of \cref{th:spanning_tree}]
    The correctness follows from Lemma~\ref{lem:tree}.
%     The first phase requires $O(\log^2 n)$ rounds w.h.p. (see Section~\ref{sec:skeleton}).
    The first phase requires $O(\log n)$ rounds (see Section~\ref{sec:idchain}).
    The second phase requires $O(1)$ rounds.
    Altogether, the spanning tree algorithm requires $O(\log n)$ rounds.
\end{proof}

\subsection{Symmetry Detection}
\label{sec:symmetry}

We now show how to detect rotational symmetries and reflection symmetries.
Due to the underlying infinite regular triangular grid graph $\Geqt$, there is only a limited number of possible symmetries.
More precisely, an amoebot structure can only be 2-fold, 3-fold or 6-fold rotationally symmetric, and reflection symmetric to axes in a direction of $D_m \cup D_p$.
Moreover, the problem is complicated by the facts
that the symmetry point may be an unoccupied node of $\Geqt$ or not a node of $\Geqt$ at all,
and that the symmetry axis may not be occupied by any amoebots.

Recall that we define a canonical skeleton by two parameters:
the direction $d$ and the sign $s$.
Note that rotating the direction results in a rotated construction,
and inverting the sign results in a reflected construction.
Hence, a symmetric amoebot structure implies a symmetric construction of canonical skeletons.
The idea of our symmetry detection algorithm is therefore to compare the canonical skeletons.
We compare the skeletons according to the following observations (compare \cref{fig:skeletons-symmetries}).

\begin{observation}
An amoebot structure is 2-fold rotationally symmetric if the $(N,+)$-skeleton and the $(S,+)$-skeleton are symmetric.
An amoebot structure is 3-fold rotationally symmetric if the $(N,+)$-skeleton and the $(ESE,+)$-skeleton are symmetric.
An amoebot structure is 6-fold rotationally symmetric
% if the $(N,+)$-skeleton, the $(ESE,+)$-skeleton, and the $(S,+)$-skeleton are symmetric.
if it is 2-fold and 3-fold rotationally symmetric.

Let $d \in D_m \cup D_p$ and let $d'$ denote the direction obtained if we rotate $d$ by $90^\circ$ counterclockwise.
An amoebot structure is reflection symmetric to an axis in direction $d$ if the $(d',+)$-skeleton and the $(d',-)$-skeleton are symmetric.
Note that due to symmetry, it is enough to only check half of $D_m \cup D_p$.
\end{observation}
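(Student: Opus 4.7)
The plan is to deduce all four cases from a single equivariance property: the canonical skeleton construction commutes with the isometries of the triangular grid. Concretely, I would show that for every isometry $\phi$ of $\Geqt$ and every choice of $(d,s)$, the image under $\phi$ of the $(d,s)$-skeleton of $S$ is the $(\phi(d),\phi(s))$-skeleton of $\phi(S)$, where $\phi$ acts on the sign $s$ by preservation when $\phi$ preserves chirality and by inversion when it reverses it.

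To establish this equivariance, I would trace the construction from \cref{sec:skeleton}. The empty regions and hence the boundary sets depend only on $S$ and transform with $\phi$. The selection of $u_B$ in each boundary set uses only the two maximizations with respect to $d$ and $\rho_s(d,90)$; under $\phi$ these become maximizations with respect to $\phi(d)$ and $\rho_{\phi(s)}(\phi(d),90)$, using the key identity that reversing chirality turns counterclockwise rotation into clockwise rotation. The straight path from $u_B$ in direction $d_p$ maps to the straight path from $\phi(u_B)$ in direction $\phi(d_p)$, and the tie-breaking rule in \cref{fig:splitting_point} is defined canonically in terms of $d_p$, so it also transforms under $\phi$. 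Since the fused cycle and the splitting point $u_{B_O}$ are uniquely determined by this data, equivariance propagates to the entire canonical skeleton and to the canonical skeleton path.

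Each case then follows by picking the appropriate isometry. For 2-fold rotational symmetry, $\phi$ is the $180^\circ$ rotation about the (possibly virtual) center of $S$; it is orientation-preserving and maps $N$ to $S$, so by equivariance it sends the $(N,+)$-skeleton to the $(S,+)$-skeleton. For 3-fold symmetry, $\phi$ is the $120^\circ$ rotation taking $N$ to $ESE$, and the corresponding skeletons match. Six-fold rotational symmetry is exactly the conjunction of 2-fold and 3-fold symmetry, since the cyclic group of order $6$ is generated by its subgroups of orders $2$ and $3$. For reflection symmetry, $\phi$ is the reflection associated with the axis in direction $d$; it fixes the direction $d'$ that both skeletons are built from and reverses chirality, so by equivariance it sends the $(d',+)$-skeleton to the $(d',-)$-skeleton.

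The main obstacle is the converse direction: recovering a symmetry of $S$ from a symmetry of its canonical skeletons. The argument I would use is that the canonical skeletons contain every boundary amoebot and canonically distinguish landmark amoebots (the path starting points $u_B$ and the splitting point $u_{B_O}$); hence an isometry matching the two skeletons as cyclic sequences is uniquely determined, and it must be the expected rotation or reflection. This $\phi$ then sends the set of boundary amoebots of $S$ bijectively to itself, and since $S$ is determined by its outer and inner boundary cycles, $\phi(S)=S$, giving the claimed symmetry.
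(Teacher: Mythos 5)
Your overall strategy -- establish that the canonical-skeleton construction is equivariant under the isometries of $\Geqt$, with orientation-reversing isometries flipping the sign $s$, and then instantiate the appropriate isometry for each case -- is exactly the reasoning the paper relies on; the paper only states this informally (``rotating the direction results in a rotated construction, and inverting the sign results in a reflected construction'') and offers no further proof, so your elaboration is welcome. You also correctly notice that the observation as literally stated is the soundness direction (symmetric skeletons imply a symmetric structure), which the paper's remark does not address, and your sketch for it is the right idea -- though to conclude $\phi(S)=S$ you need slightly more than ``$\phi$ maps boundary amoebots to boundary amoebots'', since $\phi$ a priori only maps the vertex set of one skeleton path onto that of the other (each of which may also contain inner amoebots). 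The clean way is to observe that $S$ is recoverable from the skeleton as a closed walk (e.g., via winding numbers: the connecting paths are traversed twice in opposite directions and cancel, so a node off the skeleton lies in $S$ iff its winding number is $1$), so an isometry carrying one skeleton of $S$ onto another skeleton of $S$ must fix $S$.

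There is, however, a concrete error in your reflection case. The reflection $\sigma$ across an axis in direction $d$ fixes $d$ and negates the perpendicular direction $d'=\rho_+(d,90)$; it does not fix $d'$. So by your own equivariance principle, $\sigma$ sends the $(d',+)$-skeleton to the $(\rho_+(d',180),-)$-skeleton, not to the $(d',-)$-skeleton, and the step ``it fixes the direction $d'$ that both skeletons are built from'' is false. What equivariance actually yields is that the $(e,+)$- and $(e,-)$-skeletons are exchanged precisely by the reflection across the axis in direction $e$: that reflection fixes $e$ (so the first maximization is preserved), reverses chirality, and swaps the tie-breaking directions $\rho_+(e,90)$ and $\rho_-(e,90)$. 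This matches the caption of \cref{fig:skeletons-symmetries}, which pairs the axis ``into northern direction'' with the $(N,+)$/$(N,-)$ comparison, but it is off by $90^\circ$ from the observation's pairing of the axis in direction $d$ with the $(d',\pm)$-skeletons. You should either carry out the argument with the axis taken in direction $d'$ (i.e., read ``axis in direction $d$'' as the axis normal to $d$) or explicitly flag the discrepancy; as written, your proof silently asserts an incorrect fact about the reflection in order to make the stated pairing come out.
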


In order to compare two canonical skeletons, we map each canonical $(d,s)$-skeleton path to a unique bit string by having each amoebot on the path store a partial bit string of constant length encoding the direction of its successor relative to direction $d$ and sign $s$.
Consequently, the comparison of two canonical skeletons is reduced to the comparison of the corresponding bit strings of the two skeletons.
In the following we show how such a comparison of two bit strings is possible in polylogarithmic time.

%In a first step, we compare the lengths of the chains.
%For that, we simply apply the \pasc with the first amoebot of the chains as the reference amoebot (see Section~\ref{sec:pasc}) and broadcast the identifiers of the last amoebot of the chains on the global circuit.
%By definition of $\id_{C,u_r}$, the broadcasted identifiers are equal to the lengths of the chains minus one.
%Since we only compare a constant number of lengths, the amoebots are able to compare the lengths of all chains bit by bit.
%Due to Lemmata~\ref{lem:pasc} and~\ref{lem:skeleton:complexity}, the comparison takes $O(\log n)$ rounds.

\begin{figure}[tb]
    \centering
    \begin{subfigure}[b]{0.3\textwidth}
        \centering
        \includegraphics[page=1]{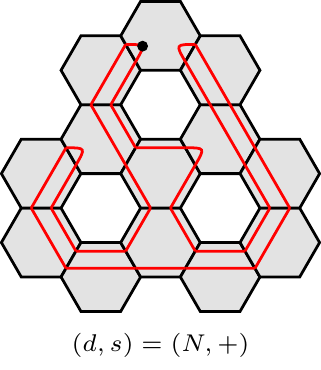}
%         \caption*{$(d, s) = (N, +)$}
%         \label{fig:}
    \end{subfigure}
    \hfill
    \begin{subfigure}[b]{0.3\textwidth}
        \centering
        \includegraphics[page=1]{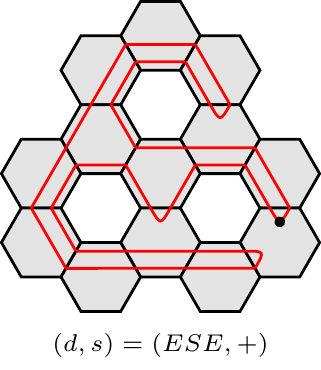}
%         \caption*{$(d, s) = (ESE, +)$}
%         \label{fig:}
    \end{subfigure}
    \hfill
    \begin{subfigure}[b]{0.3\textwidth}
        \centering
        \includegraphics[page=1]{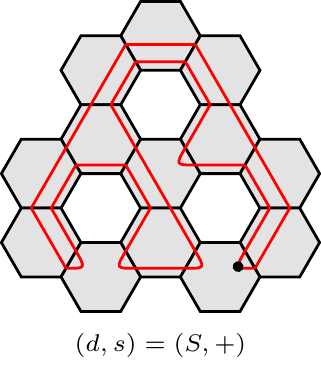}
%         \caption*{$(d, s) = (S, +)$}
%         \label{fig:}
    \end{subfigure}
    
    \begin{subfigure}[b]{0.3\textwidth}
        \centering
        \includegraphics[page=1]{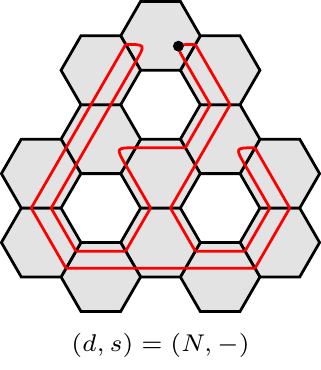}
%         \caption*{$(d, s) = (N, -)$}
%         \label{fig:}
    \end{subfigure}
    \hfill
    \begin{subfigure}[b]{0.3\textwidth}
        \centering
        \includegraphics[page=1]{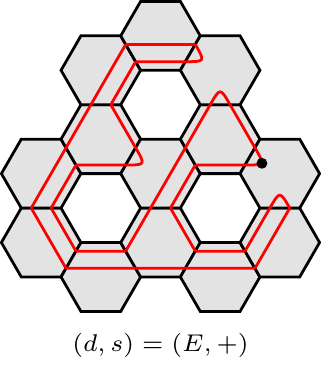}
%         \caption*{$(d, s) = (E, +)$}
%         \label{fig:}
    \end{subfigure}
    \hfill
    \begin{subfigure}[b]{0.3\textwidth}
        \centering
        \includegraphics[page=1]{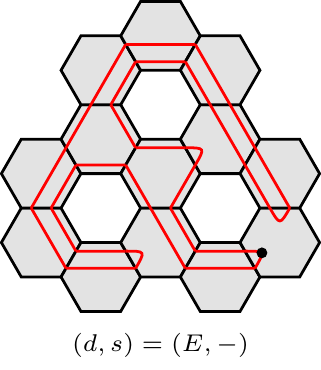}
%         \caption*{$(d, s) = (E, -)$}
%         \label{fig:}
    \end{subfigure}
    \caption{
        Symmetries of an amoebot structure.
        The amoebot structure is 3-fold rotational symmetric since the $(N,+)$- and the $(ESE,+)$-skeleton are symmetric,
        but not 2-fold or 6-fold rotational symmetric since the $(N,+)$- and the $(S,+)$-skeleton are not symmetric.
        Further, the amoebot structure is reflection symmetric to an axis into northern direction since the $(N,+)$- and the $(N,-)$-skeleton are symmetric,
        but not reflection symmetric to an axis into eastern direction since the $(E,+)$- and the $(E,-)$-skeleton are not symmetric.
    }
    \label{fig:skeletons-symmetries}
\end{figure}

%%%%%%%%%%%%%%%%%%%%%%%%%%%%%%%%%%
% STRING EQUALITY TESTING: BEGIN %
%%%%%%%%%%%%%%%%%%%%%%%%%%%%%%%%%%

%%% INTRO AND PROBLEM STATEMENT
To this end, we consider the \emph{string equality problem}:
Let $A=(A_0,\dotsc,A_{m-1})$ and $B=(B_0,\dotsc,B_{m'-1})$ be two chains of amoebots with reference amoebots $A_0$ and $B_0$, holding bit strings $a=(a_0,\dotsc,a_{m-1})$ and $b=(b_0,\dotsc,b_{m'-1})$. We show how $a$ and $b$ can be checked for equality in time $O(\log^5 m)$ w.h.p. using probabilistic polynomial identity testing.

%%% HIGH-LEVEL OVERVIEW
We first give a high-level overview of our solution:
Since we can compare the length of $A$ and $B$ by comparing the identifiers $\id_{A,A_0}(A_{m-1})=m-1$ and $\id_{B,B_0}(B_{m'-1})=m'-1$ of the last amoebots of the chains bit by bit with the \pasc (see \cref{sec:pasc}) in time $O(\log m)$, we can assume $m=m'$ in the following.
%\ap{Otherwise, ...}
%
Let $c\in\N$. Chain $A$ generates a prime $p\geq 2m$ and repeats the following procedure: $A$ samples $r$ uniformly at random from $[p]$ and sends the pair $(p,r)$ to chain $B$. Chain $A$ computes $f_a(r)=\sum_{i=0}^{m-1}a_i r^i\pmod p$, and chain $B$ computes $f_b(r)=\sum_{i=0}^{m-1}b_i r^i\pmod p$ and sends the result to chain $A$ which outputs ``$a\neq b$'' if $f_a(r)\neq f_b(r)$ and repeats the procedure otherwise. After $c\lceil\log m\rceil$ repetitions, $A$ outputs ``$a=b$''.
%
%\begin{lemma}[Schwartz–Zippel]
%	Let $p$ be a prime and $f(x)$ be a non-zero polynomial of degree $m$ over a finite field $\GF{p}$. Let $r$ be selected uniformly at random from $\GF{p}$. Then $\Prob[f(r)=0]\leq m/p$.
%\end{lemma}
%
Note that $a=b$ implies $f_a(r)=f_b(r)$. From the Schwartz-Zippel lemma follows that the one-sided error probability for a single repetition is
%\ap{Capitalize Schwartz-Zippel lemma?}
%
$\Prob[f_a(r)=f_b(r)\mid a\neq b]\leq m/p\leq 1/2$.
It follows
$\Prob[A\text{ outputs ``}a=b\text{''}\mid a\neq b]\leq 1/m^c$.
%

%%% ALGORITHM DETAILS
We now describe the algorithm in more detail:
%
%%% BUILDING BLOCK: DIVISION OF CHAIN INTO BLOCKS
First, we describe a \emph{block primitive} that we use to divide the chain $A$ into blocks of length $k=O(\log m)$ where $k=2^{\lceil\log \lambda\rceil}$, $\lambda=2l$ and $l=\lceil\log m\rceil+2$. Note that $k\geq\lambda$. We have $k\leq m$ for $m\geq 44=:\eta$. From here on we assume $m\geq\eta$ (in case $m<\eta$, the chains $A$ and $B$ can simply compare their bit strings deterministically).
Since the \pasc terminates after $\lceil\log m\rceil$ iterations, we can easily determine amoebot $A_{\lambda}$ by using the \pasc $2$ times and forwarding a marker after every iteration. Then we use the \pasc again, with the following addition (compare \cref{fig:block-primitive}):
For an amoebot let $Q$ be the partition set on which it received a beep (either its primary or secondary partition set).
% An active amoebot $A_j$ adds a partition set $Q'$ and connects it to the pin of $A_{j+1}$ to which $Q$ is connected.
% %
% Then, $A_{2l}$ beeps on $Q$.
% If $A_0$ receives a beep, the procedure terminates, otherwise we continue with the next iteration.
%
An active amoebot (except $A_0$) splits $Q$ into singletons.
We obtain a circuit between each pair of consecutive active amoebots.
Then, $A_0$ beeps on $Q$.
If $A_{\lambda}$ receives a beep, the procedure terminates, otherwise we continue with the next iteration.
After termination, exactly the amoebots $A_{ik}$ are active.
Since we can directly compare the bits $a_i$ of the amoebots between the last active amoebot and $A_{m-1}$ with the corresponding bits $b_i$ of chain $B$ in time $O(\log m)$, we assume in the following w.l.o.g. that $k\mid m$ holds.
This enables us to divide the amoebots of $A$ into $m/k$ chains $C_i=(A_{ik},\dotsc,A_{(i+1)k-1})$ of length $k$ with reference amoebot $A_{ik}$.

\begin{figure}[hbt]
    \centering
%     \begin{subfigure}[b]{\textwidth}
%         \centering
%     \end{subfigure}
    \includegraphics[page=1]{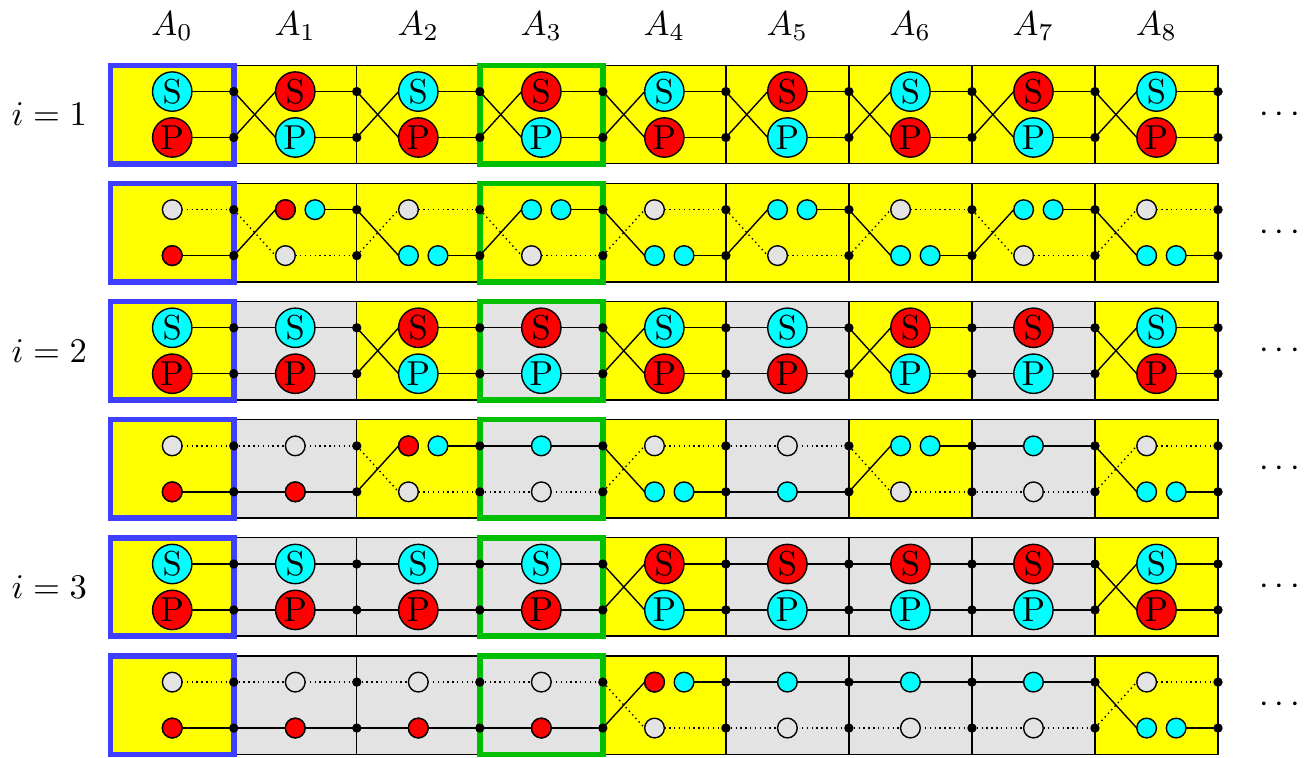}
    \caption{
        Block primitive.
        The figure shows how a chain $A$ with reference amoebot $A_0$ and a marked amoebot $A_{\lambda}$ can be divided into blocks of length $k\geq\lambda$ where $k=2^{\lceil\log\lambda\rceil}$ using the \pasc (compare \cref{fig:alg:id}) with an additional step after each iteration.
        Here, we have $\lambda=3$ and $k=4$.
        The first line shows the chain at the beginning of the $i$-th iteration of the \pasc.
        The circles indicate the partition sets. The blue bordered amoebot is $A_0$, and the green bordered amoebot is $A_{\lambda}$.
        Yellow amoebots are active, and gray amoebots are passive.
		The second line shows the configuration resulting from the following additional step:
		For an amoebot let $Q$ be the partition set on which it received a beep (depicted in red; either its primary or secondary partition set).
		An active amoebot (except $A_0$) splits $Q$ into singletons.
		We obtain a circuit between each pair of consecutive active amoebots.
		Then, $A_0$ beeps on $Q$.
		If $A_{\lambda}$ receives a beep, the procedure terminates, otherwise continue with the next iteration.
		After termination, exactly the amoebots $A_{ik}$ are active.
    }
    \label{fig:block-primitive}
\end{figure}

%%% PRIME NUMBER GENERATION: ALGORITHM
Now we describe how $A$ generates a prime $p\geq 2m$. Chain $A$ samples an $l$-bit integer $p=(p_0,\dotsc,p_{l-2},1)$ uniformly at random such that $A_i$ stores $p_i$. Note that the most significant bit is fixed to $1$ and therefore $p\in[2^{l-1},2^l[$, in particular $2m\leq p<4m$.
We check deterministically whether $p$ is a prime by checking in parallel for all $2\leq t<m$ whether $t\mid p$ (note that $\lfloor\sqrt{p}\rfloor<m$):
First, $p$ and $t=\id_{A,A_0}(A_{ik})=ik$ (using the \pasc) are stored in the first $l$ amoebots of every chain $C_i$ in time $O(\log m)$. Then, all chains $C_i$ repeat the following procedure in parallel for at most $k$ times: If $t\geq 2$, check whether $t\mid p$ in time $O(\log^2 m)$ using binary long division with remainder. Abort the prime testing, if $t\mid p$, otherwise increment $t$.
%

%%% PRIME NUMBER GENERATION: ANALYSIS
We repeat the entire procedure at most $3cl^2$ times or until we have successfully sampled a prime $p$. The runtime for the prime generation is $O(\log^5 m)$.
We now analyse the probability for the event $E_{\text{fail}}$ that no prime is generated. Using non-asymptotic bounds on the prime-counting function, one can show that the fraction of integers in $[2^{l-1},2^l[$ that are prime is at least $1/(3l)$. It follows:
$\Prob[E_{\text{fail}}]\leq (1-1/(3l))^{3cl^2}\leq 1/e^{cl}\leq 1/m^c$
%

%%% PROBABILISTIC POLYNOMIAL IDENTITY TESTING / POLYNOMIAL EVALUATION: ALGORITHM
We now address the probabilistic polynomial identity testing, focusing on the computation of $f_a(r)$ for $r\in[p]$. 
We use the previously determined division of the chain $A$ into blocks of length $k=O(\log m)$. Assume that $p$, $r$ and $e=\id_{A,A_0}(A_{ik})=ik$ are stored in the first $l$ amoebots of every chain $C_i$.
All chains $C_i$ repeat the following procedure in parallel for $k$ times: Compute $s^{(i)}=a_e r^e\pmod p$ using modular exponentiation via the right-to-left binary method. Using binary long multiplication and division with remainder, this step is possible in time $O(\log^3 m)$. Then, increment $e$.
Once all chains $C_i$ have completed the $j$-th repetition, we compute the sum of the $s^{(i)}$ modulo $p$ and store it in the first $l$ amoebots of chain $A$ using a generalization of \cref{th:addtree}. The summation is possible in time $O(\log^2 m)$. The computed sum is then added to a running total modulo $p$.

Finally, after $k$ repetitions, the result $f_a(r)$ is stored in the amoebots $A_0,\dotsc,A_{l-1}$.
The runtime for the polynomial identity testing is $O(\log^4 m)$.

Note that the size of the outer boundary set is $\Omega(\sqrt{n})$, which is also a lower bound for the size of a canonical skeleton. Hence, we get the following result:

%%% SUMMARY
\begin{theorem}
	\label{th:symmetry}
	The string equality problem	on chains of length $O(m)$ can be solved in $O(\log^5 m)$ rounds w.h.p. Therefore, the symmetry detection problem can be solved in $O(\log^5 n)$ rounds w.h.p.
\end{theorem}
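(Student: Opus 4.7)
The plan is to assemble the pieces already developed in the excerpt into a single protocol for the string equality problem and then invoke it a constant number of times to obtain symmetry detection. First I would reduce to the case $m=m'$: the two chains compute $\id_{A,A_0}(A_{m-1})$ and $\id_{B,B_0}(B_{m'-1})$ via the \pasc in $O(\log m)$ rounds (\cref{lem:pasc,lem:id:subroutine}), stream the bits of these identifiers through the global circuit, and abort with ``$a\neq b$'' if they differ. Assuming $m\geq\eta$ (the small-$m$ case is handled deterministically by streaming the bits one by one), I would then apply the block primitive of \cref{fig:block-primitive} to partition $A$ into $m/k$ chains $C_i$ of length $k=O(\log m)$, each long enough to store an $l$-bit number. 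The tail beyond the last block has length less than $k=O(\log m)$ and can be compared bitwise directly in $O(\log m)$ time.

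Next I would describe one trial of the identity test. Chain $A$ samples the $l$-bit integer $p=(p_0,\dots,p_{l-2},1)$ by having each of the amoebots $A_0,\dots,A_{l-2}$ draw a fair random bit and fixing the top bit to $1$, so $p\in[2^{l-1},2^l)$ and hence $2m\leq p<4m$. Primality is tested by broadcasting $p$ into the first $l$ positions of every block $C_i$ (via the global circuit and the \pasc-based identifier $e=\id_{A,A_0}(A_{ik})=ik$), then, in each $C_i$ in parallel, sweeping $t=e,e+1,\dots,e+k-1$ and performing a binary long division of $p$ by $t$; each division runs in $O(\log^2 m)$ time inside one block, and with $k=O(\log m)$ sweeps this gives $O(\log^3 m)$ per trial. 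A single OR across the global circuit reveals whether any $t\in[2,m)$ divided $p$. I would repeat this sampling at most $3cl^2=O(\log^2 m)$ times; the fraction of primes in $[2^{l-1},2^l)$ is at least $1/(3l)$ by standard non-asymptotic bounds on $\pi(x)$, so the failure probability is bounded by $(1-1/(3l))^{3cl^2}\leq e^{-cl}\leq 1/m^c$. This stage dominates the runtime, using $O(\log^5 m)$ rounds.

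For the evaluation of $f_a(r)=\sum_{i=0}^{m-1}a_i r^i\pmod p$, chain $A$ broadcasts $r$ to every block, and each block $C_i$ iterates $j=0,\dots,k-1$: it computes $s^{(i)}_j=a_{ik+j}\,r^{ik+j}\pmod p$ using right-to-left binary modular exponentiation (each squaring/multiplication is an $O(\log^2 m)$ long multiplication followed by an $O(\log^2 m)$ reduction, and there are $O(\log m)$ squarings, giving $O(\log^3 m)$ per exponentiation), then increments its local exponent. After each of the $k$ iterations, the partial sums $s^{(i)}_j$ across blocks are accumulated modulo $p$ into a running total held in $A_0,\dots,A_{l-1}$ by a generalized, bit-parallel version of \cref{th:addtree}, costing $O(\log^2 m)$ per accumulation; this is done $k=O(\log m)$ times. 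Chain $B$ performs the symmetric computation and streams $f_b(r)$ back via the global circuit, where it is compared to $f_a(r)$ bit by bit. The total cost of one trial is $O(\log^4 m)$.

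Finally I would combine the stages. If $a=b$ the test never rejects; if $a\neq b$ then by Schwartz--Zippel the single-trial false-accept probability is at most $m/p\leq 1/2$, so $c\lceil\log m\rceil$ independent repetitions drive the error below $1/m^c$, and a union bound with the prime-generation failure keeps the overall error $\leq 2/m^c$. The dominating cost is the $O(\log^5 m)$ prime generation, giving the first claim. For symmetry detection, each of the constantly many symmetry checks listed in the observation reduces, after computing the two relevant canonical skeletons in $O(\log^2 n)$ rounds w.h.p.\ by \cref{th:skeleton}, to comparing two bit strings of length $O(n)$ encoding successor directions along the skeleton paths; these chains have length $\Theta(n)$ by \cref{lem:skeleton:complexity} (and at least $\Omega(\sqrt n)$ from the outer boundary, so $\log m=\Theta(\log n)$), and the string equality algorithm solves each comparison in $O(\log^5 n)$ rounds w.h.p. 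The main obstacle I expect is the correct distributed implementation and runtime accounting for modular multiplication and division inside the blocks of logarithmic length; here the block primitive is essential, since it guarantees the blocks are large enough to hold all $l$-bit operands while keeping all block-internal long arithmetic within the claimed $O(\log^2 m)$ per operation.
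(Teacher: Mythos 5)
Your proposal follows essentially the same route as the paper's own argument: reduce to $m=m'$ via the \pasc, use the block primitive to form $O(\log m)$-length blocks, generate a prime $p\in[2m,4m)$ by rejection sampling with parallel trial division, evaluate $f_a(r)$ blockwise via modular exponentiation and a generalized \cref{th:addtree} summation, apply Schwartz--Zippel over $c\lceil\log m\rceil$ repetitions, and invoke the resulting string-equality test on the constantly many pairs of canonical skeleton encodings. The runtime accounting ($O(\log^5 m)$ for prime generation dominating, $O(\log^4 m)$ per identity-testing trial) and the error analysis match the paper's.
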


%%%%%%%%%%%%%%%%%%%%%%%%%%%%%%%%
% STRING EQUALITY TESTING: END %
%%%%%%%%%%%%%%%%%%%%%%%%%%%%%%%%

Additionally, we can compute the amoebot occupying the symmetry point and amoebots on the symmetry axis,
but due to the similarity to the applications in \cref{sec:idspatial},
we just sketch the algorithm.
The idea is to identify some symmetric amoebots, e.g., by computing global maxima,
and to compute symmetric identifiers with these as reference amoebots (see \cref{fig:symmetry_axis_point}).
We output all amoebots that receive the same identifier for each reference amoebot.
% For reflection symmetries, we compute identifiers perpendicular to ?%
% For 3-fold and 6-fold rotational symmetries, we need identifiers equal to the distance in the triangular gird graph $\Geqt$.
% It is possible to apply similar techniques as in \cref{sec:idspatial} to achieve such identifiers.

\begin{figure}[hbt]
    \centering
    \begin{subfigure}[b]{0.2\textwidth}
        \centering
        \includegraphics[page=1]{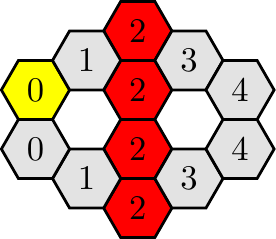}
        \caption{}
    \end{subfigure}
    \hfill
    \begin{subfigure}[b]{0.2\textwidth}
        \centering
        \includegraphics[page=1]{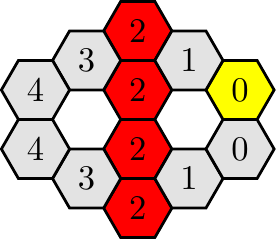}
        \caption{}
    \end{subfigure}
    \hfill
    \begin{subfigure}[b]{0.17\textwidth}
        \centering
        \includegraphics[page=1]{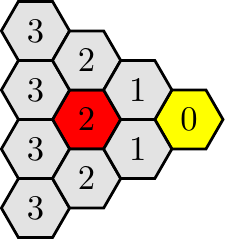}
        \caption{}
    \end{subfigure}
    \hfill
    \begin{subfigure}[b]{0.17\textwidth}
        \centering
        \includegraphics[page=1]{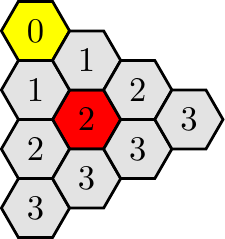}
        \caption{}
    \end{subfigure}
    \hfill
    \begin{subfigure}[b]{0.17\textwidth}
        \centering
        \includegraphics[page=1]{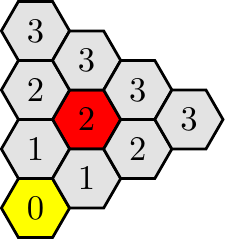}
        \caption{}
    \end{subfigure}
    
    ~
    
    %%% TWO-FOLD ROTATIONAL SYMMETRY
    \begin{subfigure}[b]{0.2\textwidth}
        \centering
        \includegraphics[page=1]{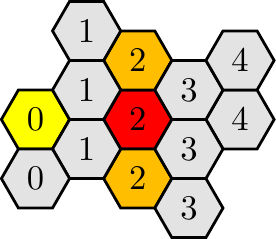}
        \caption{}
    \end{subfigure}
    \hfill
    \begin{subfigure}[b]{0.2\textwidth}
        \centering
        \includegraphics[page=1]{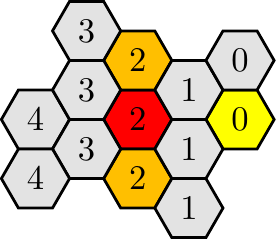}
        \caption{}
    \end{subfigure}
    \hfill
    \begin{subfigure}[b]{0.2\textwidth}
        \centering
        \includegraphics[page=1]{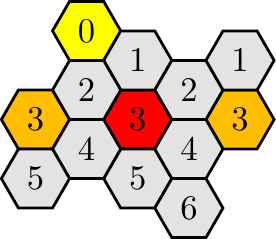}
        \caption{}
    \end{subfigure}
    \hfill
    \begin{subfigure}[b]{0.2\textwidth}
        \centering
        \includegraphics[page=1]{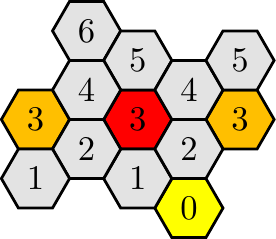}
        \caption{}
    \end{subfigure}
    \caption{
        Symmetry point and symmetry axis.
        (a) and (b) show the computation of a symmetry axis.
        (c) to (e) show the computation of a symmetry point for a 3-fold or 6-fold rotationally symmetric amoebot structure.
        (f) to (i) show the computation of a symmetry point for a 2-fold rotationally symmetric amoebot structure.
        In this case, we have to compute two axes so that the symmetry point lies on the intersection of these.
        The yellow amoebot indicates the reference amoebot.
        The red (and orange) amoebots receive the same identifier for each reference amoebot.
    }
    \label{fig:symmetry_axis_point}
\end{figure}

\section{Conclusion and Future Work}

%%% CONCLUSION
In this paper, we have proposed polylogarithmic-time solutions for a range of problems.
First, we have computed spatial identifiers in order to compute a stripe through a given amoebot and direction, and the global maxima of the given amoebot structure with respect to a direction.
Using these results, we have constructed a canonical skeleton path, which provides a unique characterization of the shape of the given amoebot structure.
Constructing canonical skeleton paths for different directions will then allow the amoebots to set up a spanning tree and to check symmetry properties of the given amoebot structure.

%%% FUTURE WORK
Our solutions could be useful for various applications  like rapid shape transformation, energy dissemination, and structural monitoring.
The details have to be worked out in future work.
Beyond that, we think that exploring further applications for the spatial identifiers and the skeleton would be interesting.

% The problems are important for a number of applications like rapid shape transformation, energy dissemination, and structural monitoring.

% ========================================
% deprecated description
% ========================================

% The goal of this paper was to develop methods that may prove helpful to achieve rapid shape transformations.
% We have found algorithmic solutions
% to identify amoebots safe to move,
% to establish useful substructures,
% and to aggregate information about the amoebot structure.
% Among others, we have proposed
% an $O(\log^2 n)$ algorithm to identify global maxima with respect to a cardinal direction,
% an $O(\log^2 n)$ algorithm to construct a spanning tree,
% and a $O(\log^5 n)$ algorithm to detect symmetries.
% For that, we have introduced two powerful primitives,
% i.e., identifiers and the (canonical) skeleton.
% 
% Future work on rapid shape transformations has to cover two directions.
% On the one hand, a proper modelling of the joint movement extension that takes physical constraints into account is still pending.
% On the other hand, we have to design an algorithmic framework.

% ========================================

%%
%% Bibliography
% \printbibliography
\bibliography{literature}

\clearpage

% \appendix

% \begin{figure}[hbt]
%     \centering
%     \includegraphics[page=1]{fig/fig_skeleton_NNW_p}
%     \caption{
%         The figure shows $(NNW, +)$-skeleton.
%         The yellow and orange amoebots indicate the global maxima of the boundary sets.
%         The orange amoebots indicate the starting points of the paths.
%         The blue lines indicate the paths between the boundary cycles.
%         The node indicates the location where the cycle is split.
%     }
%     \label{fig:skeleton_NWN_p}
% \end{figure}

\end{document}